\documentclass[journal]{IEEEtran}

\usepackage{amsmath}
\usepackage{amsfonts}
\usepackage{algorithm}
\usepackage{algorithmic}
\usepackage{cite}
\usepackage{hyperref}
\usepackage{xcolor}
\usepackage{graphicx}
\usepackage{subcaption}
\usepackage{booktabs}
\usepackage{mathtools}
\usepackage{braket}
\usepackage{amssymb}
\usepackage{multirow}
\usepackage{pifont}
\newcommand{\cmark}{\ding{51}} 
\newcommand{\xmark}{\ding{55}} 
\usepackage{amsthm}
\newtheorem{assumption}{Assumption}
\newtheorem{theorem}{Theorem}
\newtheorem{lemma}{Lemma}

\newtheorem{definition}{Definition}
\newtheorem{proposition}{Proposition}

\newtheorem{remark}{Remark}

\usepackage{listings}
\usepackage{quantikz}
\usepackage{subcaption}
\usepackage{caption}
\usepackage[T1]{fontenc}
\usepackage[utf8]{inputenc}

\definecolor{codebg}{RGB}{245,245,245}
\definecolor{codeframe}{RGB}{200,200,200}
\definecolor{codekw}{HTML}{1F77B4}
\definecolor{codestr}{HTML}{D62728}
\definecolor{codecm}{HTML}{2CA02C}

\lstdefinestyle{cmabcode}{
  backgroundcolor=\color{codebg},
  frame=single, rulecolor=\color{codeframe}, frameround=tttt,
  numbers=left, numberstyle=\tiny\color{gray}, numbersep=6pt,
  basicstyle=\ttfamily\footnotesize, columns=fullflexible,
  keywordstyle=\bfseries\color{codekw},
  commentstyle=\itshape\color{codecm!90},
  stringstyle=\color{codestr},
  showstringspaces=false, upquote=true, keepspaces=true,
  breaklines=true, extendedchars=false
}
\begin{document}
\title{Quantum Noise Mitigation with Adaptive Zero-Noise Extrapolation: A Contextual Multi-Armed Bandits Approach}

\author{Ratun Rahman and Dinh C. Nguyen
\thanks{Ratun Rahman and Dinh C Nguyen are with the Department of Electrical and Computer Engineering, University of Alabama in Huntsville, Huntsville, AL 35899, USA, emails: rr0110@uah.edu, dinh.nguyen@uah.edu}
}

\markboth{accepted in the IEEE Journal on Selected Areas in Communications}%
{Shell \MakeLowercase{\textit{et al.}}: Bare Demo of IEEEtran.cls for IEEE Journals}

\maketitle

\begin{abstract}
Variational quantum circuits (VQCs) are central to near-term quantum computing, yet their practical deployment is severely hindered by noise. While existing error mitigation methods, such as zero-noise extrapolation (ZNE), typically assume static noise, real noisy intermediate-scale quantum (NISQ) systems exhibit dynamic, time-varying noise that remains largely unaddressed. To overcome this critical gap, our work introduces a novel adaptive noise mitigation framework for VQCs that integrates ZNE with \textcolor{black}{contextual multi-armed bandits (CMAB)}, enabling dynamic, context-aware selection of circuit-folding levels based on ansatz parameters (e.g., depth, parameter count) and the evolving noise environment. Unlike fixed-fold or heuristic ZNE, our approach uses online adaptation to improve the accuracy of ZNE and reduce redundant quantum circuit executions. Our extensive simulations and experiments on real quantum hardware reveal the following important properties: (i) deeper VQCs accumulate noise, degrading accuracy and increasing the number of quantum circuit executions; (ii) ZNE restores estimator fidelity when the folding level is chosen appropriately; and (iii) CMAB-guided folding cuts quantum circuit execution round trips by up to 40\%, bytes exchanged by up to 35\%, and end-to-end cost by up to 30\% under a 10~Mbps budget, with up to 6.9\% higher estimator fidelity (CIFAR-10, depth 3, noise band $\eta=0.05$), versus fixed-fold and grid-search ZNE. These results demonstrate substantial performance gains over existing noise mitigation methods, underscoring the effectiveness of our design in supporting robust noise mitigation for VQCs. The source code is also publicly released to support reproducibility.

\end{abstract}

\begin{IEEEkeywords}
Variational quantum circuits, noise mitigation.
\end{IEEEkeywords}

\IEEEpeerreviewmaketitle

\section{Introduction}
Variational quantum circuits (VQCs), foundational to near-term variational algorithms, offer advantages in optimization, simulations, and machine learning tasks~\cite{biamonte2017quantum, rahman2026quantum}. However, in noisy intermediate-scale quantum (NISQ) devices, \textit{decoherence} and \textit{gate-level errors} severely limit fidelity~\cite{preskill2018quantum, rahman2025sporadic}. Dominant noise mechanisms, amplitude damping, dephasing, and thermal excitations, alter quantum states, with effects compounding in deeper, more entangled circuits. Most error-mitigation research assumes static noise models, yet practical NISQ systems exhibit dynamic, time-varying noise due to thermal fluctuations, crosstalk, and calibration drift, which is underexplored~\cite{cai2023quantum, koenig2025adaptive}. Gate infidelities scale with circuit depth~\cite{murali2019full}, dephasing disrupts phase coherence critical for interference, and thermal processes induce stochastic population transitions~\cite{gambetta2017building}, all of which degrade performance in superconducting qubits due to cryogenic constraints and imperfect isolation. 


The VQCs in NISQ devices have motivated a variety of error-mitigation strategies, among which zero-noise extrapolation (ZNE) is considered the most appealing, as it is hardware-agnostic and easily compatible with existing processes~\cite{temme2017error}. ZNE works by intentionally increasing noise, typically through unitary folding of gates that keep the ideal circuit unitary, and then extrapolating observed observables to the zero-noise limit~\cite{giurgica2020digital}. However, in reality, the requested fold factor $\lambda$ does not correspond to a consistent increase in effective noise: the actual amplification varies with time, qubit placement and routing, crosstalk patterns, compiler/scheduler selections, and calibration cycles. Consequently, offline folding level selection generates model mismatches, which bias the extrapolation and cause redundant circuit re-executions~\cite{strikis2021learning}. Dynamic noise, such as thermal populations, dephasing rates, and gate infidelities, can invalidate fixed-fold schedules and reduce robustness~\cite{koenig2025adaptive}. These constraints highlight the need for an \textit{adaptive, context-aware mechanism} that adjusts noise scaling online, guided by circuit-level descriptors (e.g., depth, entangling density) and device-level signals (e.g., recent calibration/telemetry), to recover fidelity while minimizing superfluous evaluations.


\subsection{Related Works}\label{sec: related}
\begin{table*}[t]
\centering
\color{black}
\scriptsize
\setlength{\tabcolsep}{3pt}
\renewcommand{\arraystretch}{1.15}
\caption{\footnotesize {\color{black}Comparison of representative quantum error-mitigation and adaptive decision-making approaches. The table reports shot overhead, adaptation capability, communication-cost awareness, contextual decision support, theoretical characterization, and the main advantage and disadvantage of each method. Here,} \cmark=Yes, \xmark=No, $\sim$=Partial{\color{black}, and theory codes are E=Empirical, P=Probabilistic, S=Statistical, and SC=Sample Complexity.}}
\label{tab:comparison}
\resizebox{\textwidth}{!}{%
\begin{tabular}{p{2.35cm}ccccc p{4.2cm} p{4.6cm}}
\toprule
Approach 
& Shot Overhead 
& Adaptation 
& Communication 
& Context 
& Theory 
& Advantage 
& Disadvantage \\
\midrule
CDR~\cite{strikis2021learning}          
& Mod. & \xmark & \xmark & \xmark & E 
& Learns correction from classically tractable Clifford or near-Clifford circuits.
& Depends on representative training circuits and may not generalize to target VQCs. \\

VD~\cite{huggins2021virtual}            
& High & \xmark & \xmark & \xmark & S 
& Suppresses incoherent errors by projecting noisy states toward the dominant eigenstate.
& Requires multiple noisy state copies, increasing sampling and execution overhead. \\

PEC~\cite{cai2023quantum}               
& High & \xmark & \xmark & \xmark & P 
& Can provide unbiased estimates when the noise model is accurately characterized.
& Often suffers from variance amplification and large shot complexity. \\

RC~\cite{jain2023improved}              
& Mod. & \xmark & \xmark & \xmark & S 
& Transforms coherent errors into more stochastic noise.
& Does not directly optimize ZNE scale selection or communication cost. \\

RL~\cite{bordoni2024quantum}            
& High & \cmark & \xmark & $\sim$ & E 
& Learns adaptive mitigation or control policies from feedback.
& May require many training interactions and repeated evaluations. \\

Hardware-aware optimization~\cite{niu2020hardware}   
& Mod. & \cmark & $\sim$ & \xmark & E 
& Improves execution through device-aware mapping, routing, or scheduling.
& Is often device-specific and does not directly address adaptive ZNE fold selection. \\
\midrule

ZNE (standard)~\cite{temme2017error}        
& High & \xmark & \xmark & \xmark & E 
& Hardware-agnostic and requires no additional logical qubits.
& Uses fixed scale factors and may require redundant circuit executions. \\

ZNE (best-practice)~\cite{majumdar2023best}      
& High & \xmark & \xmark & \xmark & E 
& Improves extrapolation reliability through practical noise-regularization choices.
& Still depends on preselected or manually tuned scale sets. \\

ZNE (noise-aware)~\cite{hour2024improving} 
& High & \xmark & \xmark & $\sim$ & E 
& Uses calibration or noise information to guide mitigation.
& Does not learn online fold selection with explicit communication-cost control. \\

ZNE (global folding)~\cite{sohn2025application} 
& High & \xmark & \xmark & \xmark & E 
& Amplifies noise systematically while preserving the target unitary. 
& Can increase circuit depth, latency, and execution overhead. \\

Adaptive KIK~\cite{koenig2025adaptive}  
& Mod. & \cmark & \xmark & $\sim$ & S 
& Improves robustness under time-varying noise.
& Does not explicitly optimize QPU--CPU communication cost.\\
\midrule

\textbf{CMAB--ZNE (ours)}               
& Low & \cmark & \cmark & \cmark & SC 
& Adaptively selects ZNE folding arms using circuit and noise context.
& Requires reliable context/reward estimation and initial exploration. \\
\bottomrule
\end{tabular}%
}
\end{table*}

\subsubsection{Variational Quantum Circuits under Noise}
In order to take advantage of expressivity and inductive bias for learning tasks, VQCs in quantum machine learning (QML) combine quantum and classical processing~\cite{biamonte2017quantum,huggins2021virtual}. Although hybrid VQC architectures are promising for classification and regression, NISQ devices suffer from decoherence, gate/readout errors, and calibration drift, which degrade predictive performance and increase the number of circuit evaluations needed for stable estimates~\cite{mari2021extending,wang2021noise}. In edge-cloud applications, these hardware constraints translate directly into communication overhead: each additional quantum run results in a quantum processing unit (QPU)-central processing unit (CPU) exchange, payload transfer, and latency accumulation.

Clifford data regression (CDR) uses Clifford-executable circuits to predict observables but requires Clifford data or its surrogates~\cite{strikis2021learning}. Virtual distillation (VD) reduces errors by projecting onto the dominating eigenstate of the noisy density matrix, but it requires many state copies and has a large sampling overhead~\cite{huggins2021virtual}. Probabilistic error cancellation (PEC) simulates noisy channels using quasi-probabilities; however, it can significantly increase shot complexity due to variance amplification~\cite{cai2023quantum}. Randomized compiling (RC) reduces coherence errors by adjusting circuits into noise-resilient ensembles but does not eliminate communication overheads~\cite{jain2023improved}. {\color{black}Although these techniques improve VQC reliability without full quantum error correction, they often require representative training circuits, additional state copies, precise noise characterization, or repeated executions. Table~\ref{tab:comparison} summarizes their key advantages and limitations.}

\subsubsection{Zero-Noise Extrapolation for Noise Mitigations in VQCs}
ZNE estimates zero-noise observables by executing intentionally noise-amplified circuit variations and extrapolating resulting measurements toward the zero-noise limit~\cite{temme2017error}. Recent best practices address coherent-error behavior and advocate Pauli twirling/whirling to regularize noise before extrapolation~\cite{majumdar2023best}. Calibration-aware folding uses hardware telemetry to inform scaling decisions~\cite{hour2024improving}. It applies to various platforms, including silicon spin qubits, employing global folding and inversion protocols to achieve high-state fidelity~\cite{sohn2025application}. Adaptive KIK improves resistance against time-varying noise by combining adaptive folding and statistical filtering~\cite{koenig2025adaptive}. {\color{black}Although ZNE is hardware-agnostic and practical for NISQ devices, most variants still rely on fixed, manually selected, or grid-searched folding scales, which can increase circuit depth, latency, shot count, and QPU--CPU communication under dynamic noise.}

\subsubsection{Multi-Armed and Contextual Bandits}
The multi-armed bandits (MABs) approach sequential decision-making by balancing exploration and exploitation with traditional guarantees. For instance, UCB1 achieves logarithmic regret in stochastic settings~\cite{auer2002finite}, whereas Thompson sampling has competitive theoretical and empirical performance~\cite{thompson1933likelihood,agrawal2012analysis}. Bandits are used in large-scale online systems to quickly choose and personalize models based on streaming feedback~\cite{li2010contextual}. NISQ reward signals can be non-stationary and high-variance due to time-varying decoherence and calibration drift, which makes fixed exploration schedules challenging. \textcolor{black}{Contextual multi-armed bandits (CMABs)} use side information (environmental/input features) to condition decisions, addressing heterogeneity and drift~\cite{li2010contextual,esmaeili2025robust}. This paradigm has proven to be effective in recommendation, dynamic pricing, and adaptive feature acquisition~\cite{xue2025multiple,bouneffouf2025multi}. Quantum-focused bandit formulations can guide quantum decision-making: quantum contextual bandits for Hamiltonian learning achieve near-optimal energy estimates~\cite{brahmachari2024quantum}; quantum linear-algebra tools can accelerate Thompson sampling~\cite{lumbreras2022multi}; and quantum-kernelized UCB provides regret guarantees~\cite{hikima2024quantum}. {\color{black}However, rather than cost-aware ZNE fold-set selection, current quantum bandit studies primarily concentrate on measurement selection, Hamiltonian learning, parameter tuning, or regret analysis.}

\begin{figure*}
    \centering
    \includegraphics[width=0.99\linewidth]{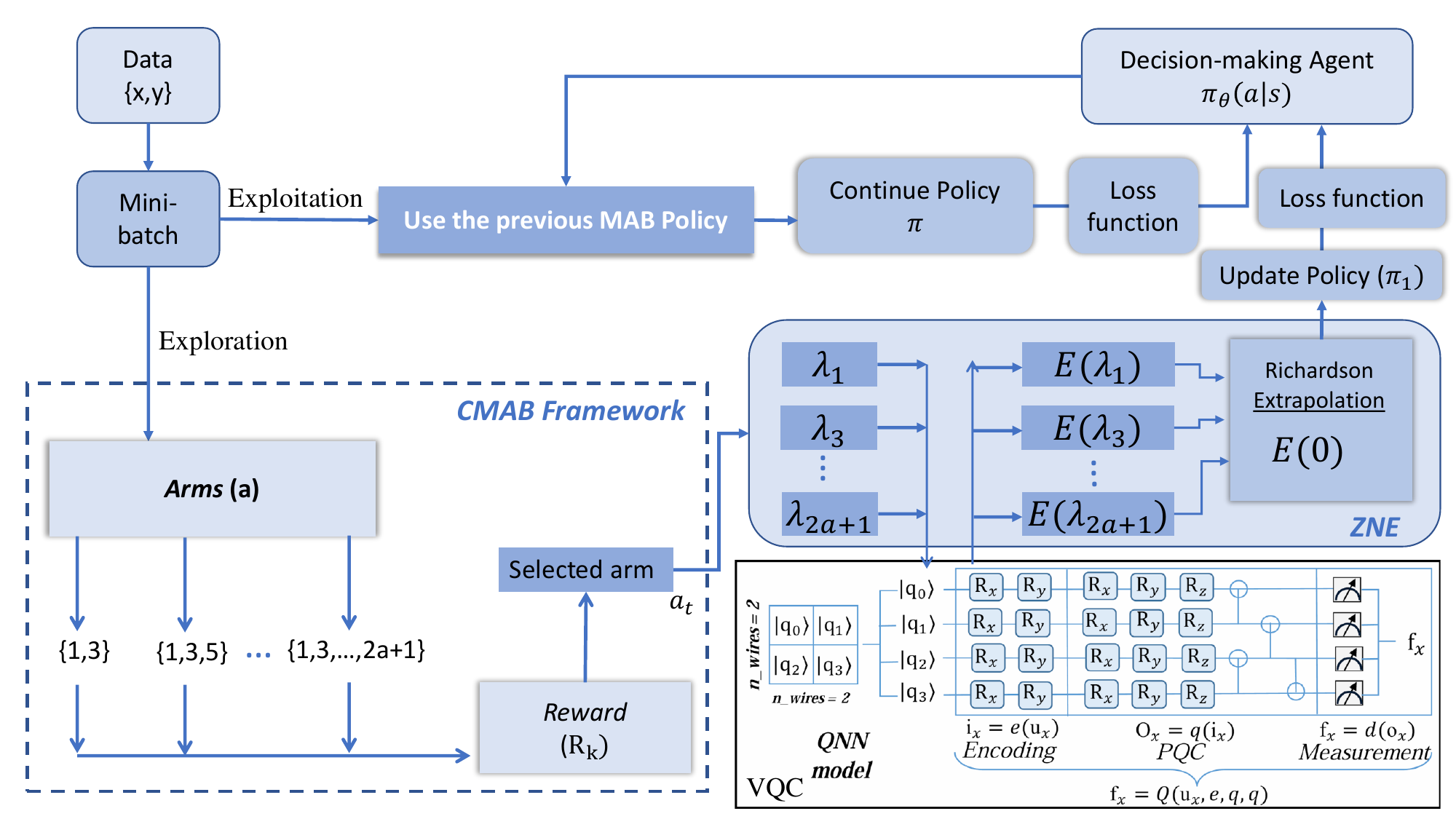}
    \caption{\small An overview of the proposed CMAB-ZNE training loop.  In each mini-batch, the agent either \emph{exploits} the current bandit policy or \emph{explores} candidate arms—sets of ZNE noise scales such as $\{1,3\}$, $\{1,3,5\}$, \ldots, $\{1,3,\ldots,2a\!+\!1\}$.  The chosen arm $a_t$ sets the folding scales $\lambda$ utilized by the ZNE module to evaluate expectations $E(\lambda)$ and extrapolate the zero-noise estimate $E(0)$ (e.g., using Richardson).  The variational quantum circuit (encoding $\rightarrow$ PQC $\rightarrow$ measurement) generates predictions $f_x$, which are used to compute the task loss.  A safety-aware reward $R_k$, which trades off accuracy/variance and circuit depth, is returned to update the contextual bandit policy $\pi_{\theta}(a\mid s)$ for the next batch, finishing the explore-exploit cycle.}
    \label{fig:overview}
    \vspace{-1em}
\end{figure*}

{\color{black}
\subsubsection{Positioning of Our Work}
Unlike prior mitigation techniques, our CMAB--ZNE framework treats ZNE fold-set selection as an online, cost-aware decision problem. The proposed policy adapts to the circuit and noise context while explicitly accounting for execution overhead, including round trips, exchanged bytes, and runtime cost.
}

\subsection{Motivations and Key Contributions}
Despite such research efforts, several limitations still exist:
\begin{itemize}
    \item Most noise mitigation frameworks \cite{strikis2021learning, cai2023quantum, temme2017error, koenig2025adaptive, bordoni2024quantum} do not explicitly optimize \textit{communication costs}, i.e., round trips/repeated quantum circuit executions to implement ZNE, and often lack real-time adaptivity to \emph{time-varying} noise in latency-sensitive settings.
   
    \item Existing advanced ZNE operations \cite{majumdar2023best, hour2024improving, sohn2025application, koenig2025adaptive} rely on fixed or heuristic scaling schedules, leading to redundant evaluations and additional quantum circuit execution round-trip without optimizing end-to-end communication cost. Moreover, they typically ignore \emph{context} (ansatz/noise features) when selecting folding levels online under an explicit evaluation budget.
     \item No CMAB works \cite{li2010contextual, esmaeili2025robust, xue2025multiple, bouneffouf2025multi, brahmachari2024quantum, lumbreras2022multi, hikima2024quantum} go beyond state or measurement selection and theoretical analysis; in particular, none perform contextual fold selection for ZNE. Table~\ref{tab:comparison} compares related error-mitigation methods with our work. 
\end{itemize}

\noindent \textbf{Motivated by the limitations, we propose a novel context-aware ZNE technique that transforms noise-scaling for VQCs into a CMAB problem.} Instead of preset folds, the agent learns the fold factor $\lambda$ online from the circuit and noise context, yielding a more favorable fidelity--efficiency trade-off, minimizing overfolding, and improving resilience for VQC workloads under drift. The CMAB policy balances exploration and exploitation in real time, and a budgeted-bandit variant terminates when a target estimator-variance threshold is met, resulting in equivalent fidelity with fewer quantum runs. A Lindblad model with time-varying rates captures noise dynamics~\cite{krinner2019engineering}, with a cost-controlled real-hardware validation illustrating feasibility. \textcolor{black}{Fig.~\ref{fig:overview} summarizes the overall CMAB--ZNE workflow. The figure connects the three main components of the proposed approach: the VQC generates noisy expectation values, the ZNE module estimates the zero-noise output using the selected folding arm, and the CMAB agent updates its policy based on the resulting reward. This visualization clarifies that the bandit does not replace ZNE; rather, it adaptively selects the ZNE folding configuration according to the current circuit and noise context.}
Our contributions are summarized as follows.
\begin{itemize}
    \item We introduce a dynamic noise model for \emph{VQCs} based on the Lindblad master equation, extracting gate- and circuit-level contextual features (e.g., depth-normalized infidelity, coherence budget) from time-varying decoherence and relaxation, augmented with public calibration metrics ($T_1/T_2$, gate/readout errors) (Section II).

    \item We formulate ZNE fold selection as a CMAB problem, using a context vector combining Lindblad-derived features and calibration signals, with a budgeted-bandit stopping rule to minimize evaluations when target variance is reached, and analyze its sample complexity (Section II).

    \item We conduct extensive simulations and experiments on real quantum hardware via VQC-based machine learning tasks on   \textsc{CIFAR-10} and \textsc{EuroSAT} datasets (Section III), showing \textit{CMAB--ZNE} achieves up to 6.9\% higher fidelity (CIFAR-10, depth 3, noise band $\eta=0.05$) and reduces quantum circuit execution round trips by up to 40\%, bytes by 35\%, versus fixed-fold and grid-search ZNE, and is competitive with other mitigation approaches, while preserving accuracy. \textbf{\textit{Our source code is online at: \url{https://github.com/Ratun11/cmab_zne/}.}}
\end{itemize}

\section{Methodology} \label{sec: method}
\subsection{Noise Model} 
\label{subsec:noise_model}

\noindent\textbf{Lindblad Dynamics.}
The qubit density matrix \(\rho(t)\) evolves under thermal relaxation, excitation, and dephasing as
\begin{equation}
\begin{aligned}
    \frac{d \rho}{d t} = 
&\gamma n_{\mathrm{noise}}(t) \mathcal{L}[\sigma_+](\rho) +
\gamma \left(n_{\mathrm{noise}}(t) + 1\right) \mathcal{L}[\sigma_-](\rho) \\
&+
\frac{1}{2 T_2} \mathcal{L}[\sigma_z](\rho),
\end{aligned}
\label{eq:lindblad_full}
\end{equation}
where \(\gamma\) is the emission rate, \(n_{\mathrm{noise}}(t)\) is the thermal photon number, \(T_2\) is the dephasing time, and \(\mathcal{L}[A](\rho)=A\rho A^\dagger-\frac{1}{2}\{A^\dagger A,\rho\}\) is the Lindblad super-operator~\cite{alicki2007quantum}.
{\color{black}
By contrasting noiseless control, stochastic control perturbation, and Lindblad-style relaxation/dephasing, Fig.~\ref{fig:noise_profiles} shows why a time-dependent noise model is required. The clean profile depicts ideal unitary evolution, whereas the noisy and Lindblad-driven profiles show how fluctuations, relaxation, and dephasing disturb the intended dynamics. In Eq.~\eqref{eq:lindblad_full}, the relaxation and thermal-excitation terms are associated with $T_1$, while the phase-decoherence term is associated with $T_2$. As a result, the model captures the dominant superconducting-qubit decoherence mechanisms while approximating slow non-stationary effects such as temperature drift, calibration variation, and device instability. These time-varying features motivate their inclusion in the CMAB context vector, since the folding decision should depend on both circuit depth and the current noise condition.
}

\begin{figure*}[t]
    \centering
    \includegraphics[width=0.85\textwidth]{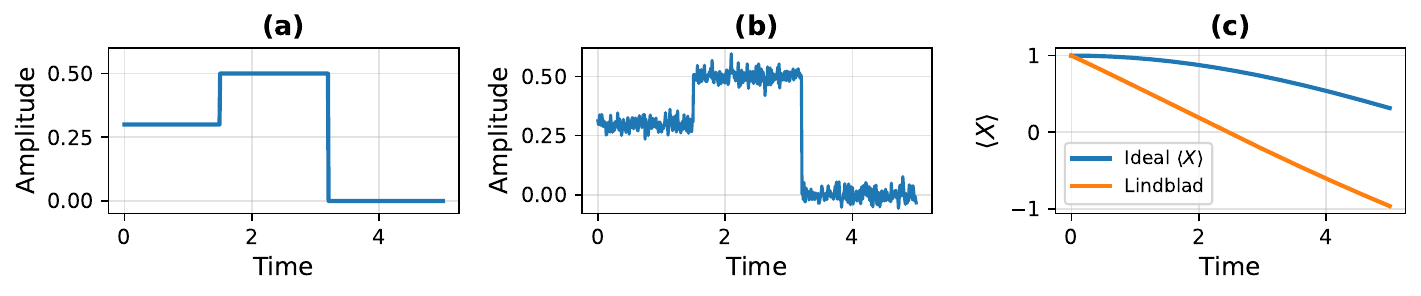}
    \caption{Quantum gate-noise visualization. The panels show: (a) clean control evolution, (b) noisy evolution with stochastic fluctuations, and (c) Lindblad-style relaxation and dephasing behavior.}
    \label{fig:noise_profiles}
\end{figure*}

\noindent\textbf{Time-dependent Thermal Noise.}
To capture fluctuating environments and non-stationary thermal loads caused by scheduling, drift, or cross-talk, we define
\begin{equation}
n_{\mathrm{noise}}(t) =
\frac{A - 1}{A} n_{\mathrm{BE}}(T_{\mathrm{qb}}(t))
+
\frac{1}{A} n_{\mathrm{BE}}(T_{\mathrm{ext}}(t)),
\label{eq:nnoise_dynamic}
\end{equation}
where
\[
n_{\mathrm{BE}}(T)=
\frac{1}{\exp\left(\frac{\hbar\omega_0}{k_BT}\right)-1}
\]
is the Bose--Einstein photon distribution, \(T_{\mathrm{qb}}(t)\) and \(T_{\mathrm{ext}}(t)\) are the time-varying temperatures of the qubit and external electronics, and \(A\) is the control-line attenuation factor~\cite{krinner2019engineering}. This formulation is consistent with empirical reports of temperature drift and time-correlated errors in NISQ hardware~\cite{hertzberg2021laser}.

\noindent\textbf{Performance Metrics.}
The time-dependent gate infidelity and worst-case error probability are
\begin{equation}
\begin{aligned}
    I_{\mathrm{F}}(t)&=\gamma\tau_{\mathrm{gate}}\left(1+n_{\mathrm{noise}}(t)\right),\\
p_{\mathrm{err}}(t)&=
\frac{1}{2}\gamma\tau_{\mathrm{gate}}
\left(1+2n_{\mathrm{noise}}(t)\right),
\end{aligned}
\end{equation}
where \(I_{\mathrm{F}}(t)\) measures deviation from the ideal unitary gate, \(p_{\mathrm{err}}(t)\) represents the worst-case error probability, and \(\tau_{\mathrm{gate}}\) is the gate duration.

{\color{black}
\noindent\textbf{Gate-Level Noise Impact.}
The gate-level expression used in this work follows from the weak-noise, short-gate-time approximation of the Lindblad/\textcolor{black}{metric-noise-resource} (MNR) model. Over a gate interval of duration $\tau_{\mathrm{gate}}$, the first-order accumulated error is proportional to the effective relaxation rate and thermal occupation. Neglecting higher-order terms $\mathcal{O}((\gamma\tau_{\mathrm{gate}})^2)$ under $\gamma\tau_{\mathrm{gate}}\ll 1$, the gate fidelity becomes
\begin{equation}
    M_{\mathrm{gate}}
    =
    1-I_F(t)
    \approx
    1-\gamma\tau_{\mathrm{gate}}\bigl(1+n_{\mathrm{noise}}(t)\bigr),
    \label{eq:metric}
\end{equation}
following the first-order Lindblad/MNR approximation for Markovian decoherence~\cite{alicki2007quantum,fellous2023optimizing}. This shows that gate fidelity decreases approximately linearly with gate duration and effective thermal occupation.
}

\noindent\textbf{Circuit-Level Impact.}
For a circuit with \(N_g\) gates, the instantaneous circuit fidelity is approximated as
\begin{equation}
M_{\mathrm{circuit}}
=
1-N_g I_{\mathrm{F}}(t)
=
1-N_g\gamma\tau_{\mathrm{gate}}
\left(1+n_{\mathrm{noise}}(t)\right).
\label{eq:mcircuit}
\end{equation}
Over an execution window \([0,T]\), the average circuit fidelity becomes
\begin{equation}
M_{\mathrm{circuit}}
=
1-\frac{N_g\gamma\tau_{\mathrm{gate}}}{T}
\int_0^T
\left(1+n_{\mathrm{noise}}(t)\right)dt .
\label{eq:mcircuit_dynamic}
\end{equation}
This form generalizes the static model by explicitly accounting for time-varying degradation during circuit execution.
{\color{black}
Fig.~\ref{fig:noise_effect} shows the physical interpretation of the model at the qubit-state level using QuTiP~\cite{johansson2012qutip}. In the ideal case, the qubit maintains coherence on the Bloch sphere; under relaxation and dephasing, the state becomes mixed, and the measured expectation values become biased. ZNE aims to remove this bias, while the CMAB policy determines how aggressively the circuit should be folded to estimate the zero-noise value.
}

\begin{figure}
    \centering
    \includegraphics[width=0.90\linewidth]{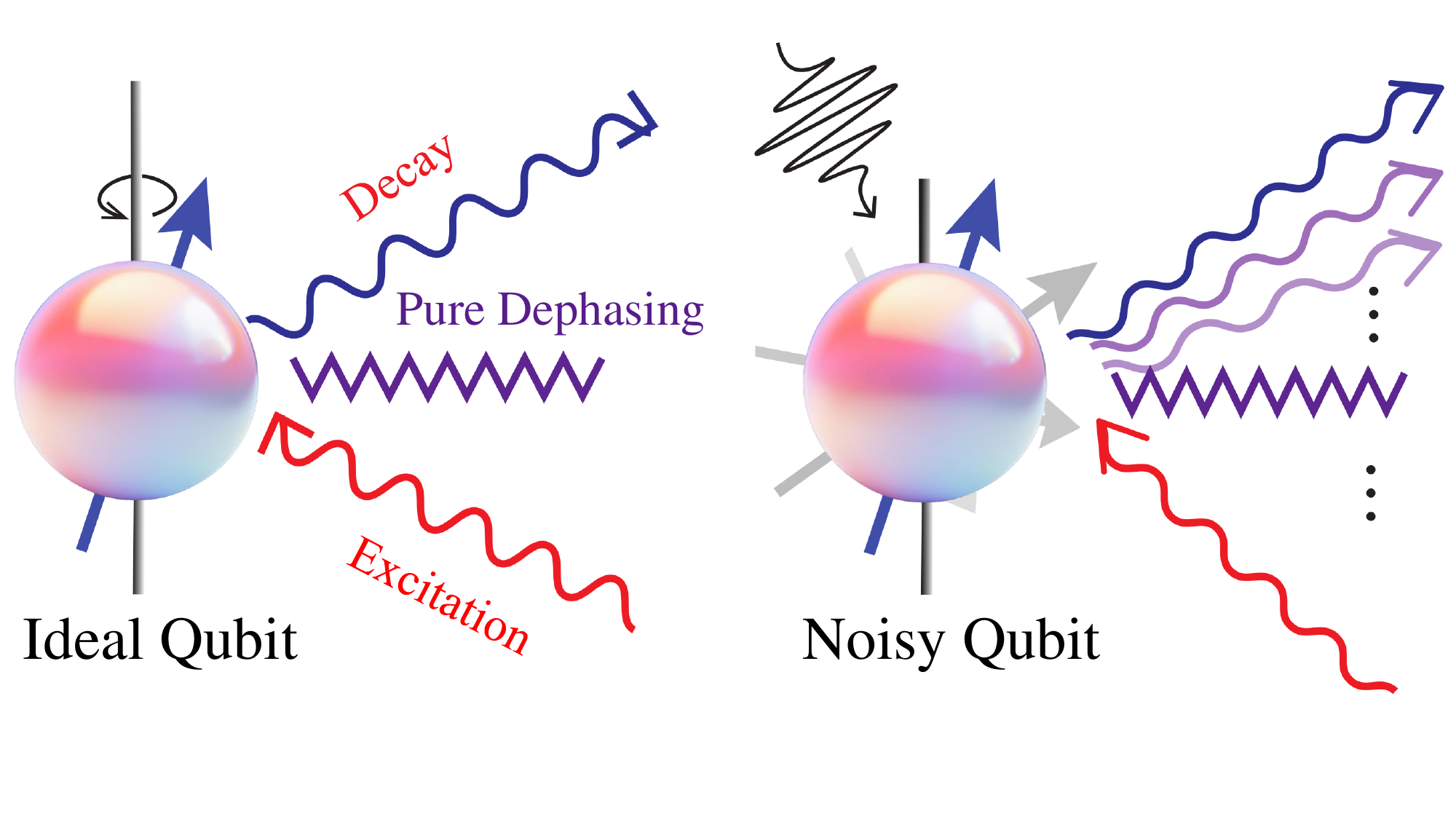}
    \vspace{-2em}
    \caption{\footnotesize Quantum noise impacts a single qubit. \textit{Left.} An ideal qubit maintains coherence. \textit{Right.} Environmental coupling causes relaxation/excitation (\(T_1\)) and pure dephasing (\(T_2\)), leading to a mixed state, biased measurements, and degraded learning performance.}
    \vspace{-1.5em}
    \label{fig:noise_effect}
\end{figure}

\subsection{ZNE for Noise Mitigation}
After characterizing gate-level noise using the MNR model~\cite{fellous2023optimizing}, we use \textit{zero-noise extrapolation} (ZNE) to estimate the ideal circuit output without modifying the hardware. ZNE intentionally amplifies noise through controlled circuit folding, evaluates the circuit at multiple noise scales, and extrapolates the measured observable back to the zero-noise limit~\cite{preskill2018quantum,temme2017error}.

\textbf{Motivation.} From the MNR model, the gate-level fidelity is
\[
M_{\mathrm{gate}} = 1 - \gamma \tau_{\mathrm{gate}} \left(1 + n_{\mathrm{noise}}(t) \right),
\label{eq:gate_fidelity}
\]
which shows that gate performance degrades with longer gate duration and stronger time-varying noise. ZNE exploits this behavior by treating noise as a controllable signal: instead of requiring exact hardware calibration, it evaluates the circuit under amplified noise and estimates the zero-noise observable through extrapolation. This improves output fidelity at the cost of additional circuit executions and increased effective depth~\cite{he2020zero}.

\textbf{Circuit Folding and Noise Scaling.} We implement ZNE using unitary circuit folding. For a circuit represented by unitary \(U\), the folded circuit is
\begin{equation}
U_{\mathrm{scaled}} = U (U^\dagger U)^n,
\label{eq:circuit_folding}
\end{equation}
where \(n\in\mathbb{Z}^{+}\) is the folding level. Since \((U^\dagger U)^nU=U\) ideally, folding preserves the target unitary while increasing the effective circuit depth and noise exposure. The corresponding noise scale is
\[
\lambda = 1+2n,
\]
and for a base circuit with \(N_g\) gates, the folded circuit has
\begin{equation}
N_g^{(\lambda)} = \lambda N_g = (1+2n)N_g .
\end{equation}
Substituting this into the time-varying MNR fidelity model gives
\begin{equation}
\begin{aligned}
M_{\mathrm{circuit}}
&= 1 - \frac{N_g^{(\lambda)}\gamma\tau_{\mathrm{gate}}}{T}
\int_0^T \left(1+n_{\mathrm{noise}}(t)\right)dt  \\
&= 1 - \frac{(1+2n)N_g\gamma\tau_{\mathrm{gate}}}{T}
\int_0^T \left(1+n_{\mathrm{noise}}(t)\right)dt .
\end{aligned}
\label{eq:circuit_fidelity_scaled}
\end{equation}
Thus, increasing \(\lambda\) amplifies noise in a controlled manner by increasing the effective number of gates~\cite{giurgica2020digital}.
\begin{figure}[t]
    \centering
    \includegraphics[width=0.7\columnwidth]{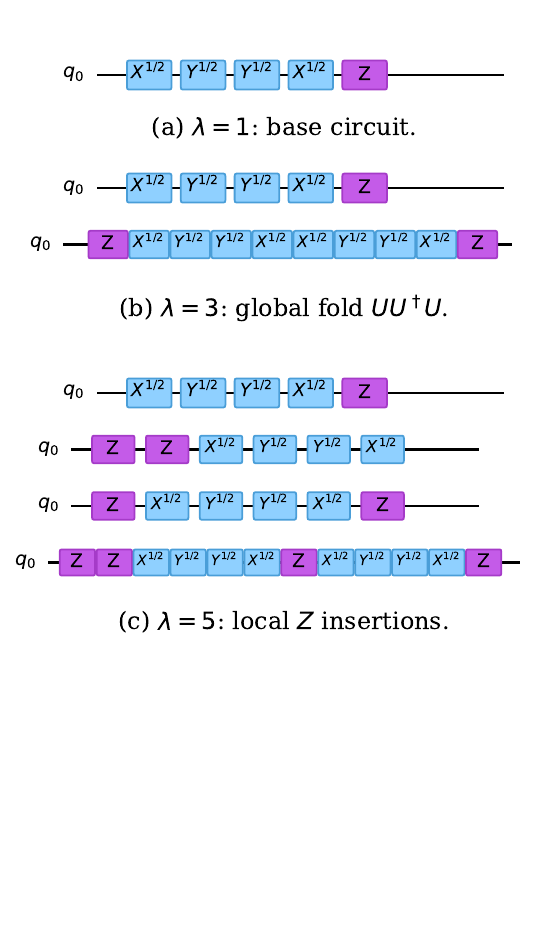}
    \vspace{-10em}
    \caption{\footnotesize Unitary folding for noise scaling in zero-noise extrapolation (ZNE). Larger folding factors increase effective circuit depth and noise exposure while preserving the ideal unitary operation.}
    \label{fig:zne-folds}
    \vspace{-0.8em}
\end{figure}
{\color{black} Fig.~\ref{fig:zne-folds} shows the circuit-level meaning of the folding factor $\lambda$. The base circuit corresponds to $\lambda=1$, while larger scales, such as $\lambda=3$ and $\lambda=5$, introduce logically canceling circuit segments that preserve the ideal unitary while increasing noise exposure. In CMAB--ZNE, each bandit arm corresponds to such a folding-scale set; therefore, larger folds may improve extrapolation but also increase circuit depth, runtime, and communication overhead.}

\textbf{Expectation Values and Extrapolation.} For each folded circuit, we compute an expectation value \(E(\lambda)\). Under a smooth noise response, this value can be approximated as
\begin{equation}
E(\lambda) = E(0) + c_1\lambda + c_2\lambda^2 + \mathcal{O}(\lambda^3),
\label{eq:expectation_polynomial}
\end{equation}
where \(E(0)\) is the ideal zero-noise expectation and \(c_1,c_2,\ldots\) capture noise-induced bias. With two scaling factors \(\lambda_1\) and \(\lambda_2\), first-order Richardson extrapolation estimates
\begin{equation}
E(0) \approx
\frac{\lambda_2E(\lambda_1)-\lambda_1E(\lambda_2)}
{\lambda_2-\lambda_1}.
\label{eq:richardson}
\end{equation}
For more than two scales, we fit a low-order polynomial
\begin{equation}
E(\lambda)=E(0)+\sum_{k=1}^{m}c_k\lambda^k,
\end{equation}
where the intercept gives the zero-noise estimate. In this work, we use linear fitting for \(\{1,3\}\) and low-order polynomial fitting for larger scale sets such as \(\{1,3,5\}\) and \(\{1,3,5,7\}\), balancing extrapolation accuracy against additional circuit-execution cost.

\begin{figure}[t]
\centering
\includegraphics[width=0.99\linewidth]{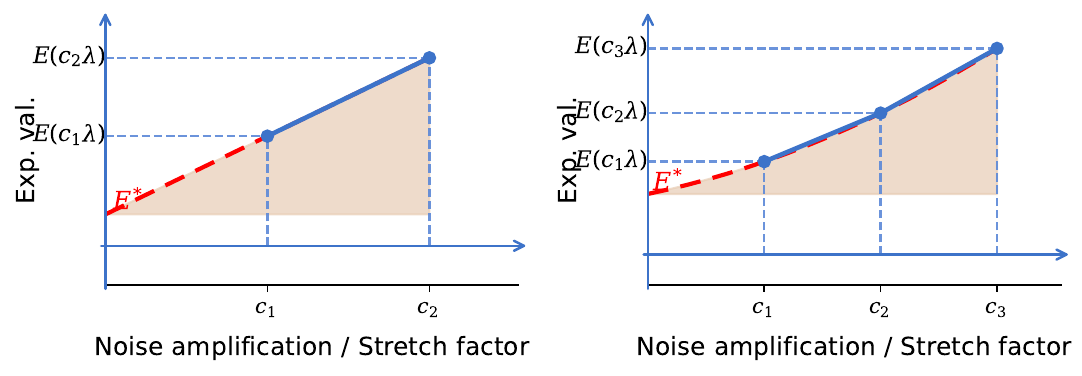}
\caption{\footnotesize 
Illustrating ZNE with varying noise amplification levels. \textit{Left:} Linear extrapolation from two amplified circuits at \(c_1\lambda\) and \(c_2\lambda\) predicts the noise-free expectation \(E^*\). \textit{Right:} Polynomial extrapolation using three points enables higher-order fitting.}
\vspace{-1em}
\label{fig:zne_extrapolation}
\end{figure}

{\color{black}  Fig.~\ref{fig:zne_extrapolation} illustrates the geometric interpretation of the ZNE extrapolation step. Each point corresponds to an expectation value measured from a folded circuit at a different noise scale, and the fitted curve is extrapolated back to the zero-noise intercept $E(0)$. This directly motivates the CMAB formulation, since each arm selects a different scale set and therefore changes both extrapolation quality and execution cost.}
The ZNE procedure used in our framework is therefore summarized as follows. For each selected arm \(\Lambda\), we fold the base circuit at every \(\lambda\in\Lambda\), execute each folded circuit with \(S\) shots, estimate \(E(\lambda)\), and fit the measured pairs \(\{(\lambda,E(\lambda))\}_{\lambda\in\Lambda}\) using Richardson or low-order polynomial extrapolation. The resulting zero-noise estimate \(\widehat{E}(0)\) is then passed to the downstream CMAB policy for reward computation and arm selection.

\textbf{Why ZNE Works.} ZNE works when the observable changes smoothly with the effective noise scale. Circuit folding amplifies the noise while preserving the ideal unitary, and Richardson or polynomial extrapolation cancels the leading noise terms to recover an estimate of \(E(0)\). Since ZNE is hardware-agnostic and does not require full noise calibration, it is suitable for NISQ devices with complex and time-varying noise~\cite{endo2018practical}.

\subsection{CMAB for Selecting ZNE Depth}
\label{subsec:zne-cmab}

We formulate ZNE fold selection as a cost-aware contextual multi-armed bandit (CMAB) problem. At each decision round \(t\), e.g., per epoch, mini-batch, or calibration interval, the policy observes a context vector \(x_t\in\mathbb{R}^p\), selects one ZNE arm, and receives a reward that reflects both extrapolation quality and execution cost under time-varying MNR noise.

\textbf{Arms and cost.}
Each arm \(k\in\{1,\ldots,K\}\) represents a ZNE folding-scale set
\[
\Lambda_k=\{\lambda_1^k,\ldots,\lambda_{m_k}^k\},
\]
such as \(\{1,3\}\), \(\{1,3,5\}\), or \(\{1,3,5,7\}\). Larger arms provide more extrapolation points but require more folded-circuit executions. Since folding with scale \(\lambda\) approximately multiplies the effective gate count and runtime,
\begin{equation}
N_g(\lambda)=\lambda N_g,\qquad
\tau_{\mathrm{circ}}(\lambda)=\lambda N_g\tau_{\mathrm{gate}},
\label{eq:fold-scaling}
\end{equation}
we define the cumulative scale of arm \(k\) as
\begin{equation}
S_k=\sum_{j=1}^{m_k}\lambda_j^k,
\qquad
\deg(\Lambda_k)=m_k-1 .
\label{eq:cum-scale}
\end{equation}
A ZNE evaluation with arm \(k\) induces \(R_k=m_k\) round trips, approximately \(B_k\approx m_k(B_{\mathrm{req}}+B_{\mathrm{resp}})\) transferred bytes, and runtime cost \(C_k^{\mathrm{rt}}\propto N_g\tau_{\mathrm{gate}}S_k\). We combine these terms using the normalized cost proxy
\begin{equation}
\tilde C_k=
\frac{\alpha R_k+\beta B_k+\gamma c C_k^{\mathrm{rt}}}
{\max_j(\alpha R_j+\beta B_j+\gamma c C_j^{\mathrm{rt}})}
\in[0,1],
\label{eq:combined-cost}
\end{equation}
where \(c\) converts runtime into the same scale and \(\alpha,\beta,\gamma\ge0\) weight round trips, bytes, and runtime.

\textbf{Context and reward.}
The context \(x_t\) summarizes the current circuit and noise condition using MNR/Lindblad features, circuit descriptors, recent performance deltas, and available hardware proxies such as calibration or drift indicators. After selecting arm \(k_t\), the base circuit is folded at each \(\lambda\in\Lambda_{k_t}\), executed with \(S\) shots, and used to estimate the measured pairs
\[
\{(\lambda,E(\lambda))\}_{\lambda\in\Lambda_{k_t}} .
\]
These pairs are fitted using Richardson or low-order polynomial extrapolation to obtain the zero-noise estimate \(\widehat{E}_{t,k_t}(0)\).

The selected arm receives a scalar reward that balances prediction quality, extrapolation stability, and folding overhead:
\begin{equation}
R_{t,k_t}
=
-\mathrm{CE}_{t,k_t}^{\mathrm{ZNE}}
-\beta\,\mathrm{Var}\!\left[\widehat{E}_{t,k_t}(0)\right]
-\alpha\left(d(\Lambda_{k_t})-d_0\right)^2 .
\label{eq:reward_detailed_cmab_zne}
\end{equation}
Here, \(\mathrm{CE}_{t,k_t}^{\mathrm{ZNE}}\) is the cross-entropy loss from the ZNE-corrected prediction, \(\mathrm{Var}[\widehat{E}_{t,k_t}(0)]\) measures extrapolation uncertainty, and \(d(\Lambda_{k_t})\) is a folding-depth proxy such as \(\max_{\lambda\in\Lambda_{k_t}}\lambda\) or \(\sum_{\lambda\in\Lambda_{k_t}}\lambda\). Thus, CMAB--ZNE favors shallow arms when they are accurate and stable, while selecting deeper arms only when their extrapolation gain justifies the additional cost.

\textbf{Cost-aware LinUCB selection.}
For each arm \(k\), we maintain ridge-regression parameters
\[
A_k\leftarrow \lambda_0 I_p,\qquad b_k\leftarrow 0,\qquad
\widehat{\theta}_k=A_k^{-1}b_k .
\]
Given \(x_t\), the upper-confidence score is
\begin{equation}
\mathrm{UCB}_k(x_t)
=
x_t^\top\widehat{\theta}_k
+
\eta\sqrt{x_t^\top A_k^{-1}x_t},
\end{equation}
and the selected arm is
\begin{equation}
k_t=
\arg\max_k
\left(
\mathrm{UCB}_k(x_t)-\lambda_c\tilde C_k
\right),
\label{eq:select}
\end{equation}
where \(\eta\) controls exploration and \(\lambda_c\) controls cost sensitivity. After observing the reward, only the selected arm is updated:
\begin{equation}
A_{k_t}\leftarrow \rho A_{k_t}+x_tx_t^\top,\quad
b_{k_t}\leftarrow \rho b_{k_t}+R_{t,k_t}x_t,\quad
\widehat{\theta}_{k_t}\leftarrow A_{k_t}^{-1}b_{k_t}.
\label{eq:discounted_ridge_update_cmab_zne}
\end{equation}
The discount factor \(\rho<1\) reduces the influence of stale observations under dynamic noise, while \(\rho=1\) recovers the non-discounted update analyzed in Section~\ref{subsec:theory_cmab_zne}. Ridge regularization keeps \(A_k\) positive definite, and the exploration bonus prevents premature convergence to a single arm.

\textbf{Budgeted stopping.}
To limit communication, CMAB--ZNE stops additional ZNE queries when the selected estimate is stable or when the cost-adjusted gain over the next-best arm is negligible:
\begin{equation}
\mathrm{Var}\!\left[E_{t,k_t}^{\ast}\right]\le\sigma_{\mathrm{target}}^2
\quad\text{or}\quad
\Delta_{\mathrm{UCB}}(t)<\epsilon,
\label{eq:budget}
\end{equation}
where
\(
\Delta_{\mathrm{UCB}}(t)=
\left(\mathrm{UCB}_{k_t}(x_t)-\lambda_c\tilde C_{k_t}\right)
-
\max_{j\ne k_t}
\left(\mathrm{UCB}_{j}(x_t)-\lambda_c\tilde C_j\right).
\)
This aligns extrapolation quality with round-trip, byte-transfer, and runtime constraints.

\begin{algorithm}[t]
\footnotesize
\caption{\footnotesize CMAB-ZNE: Cost-Aware Contextual Bandit for ZNE Arm Selection}
\label{alg:cmab_zne}
\begin{algorithmic}[1]
\REQUIRE Arms \(\{\Lambda_k\}_{k=1}^{K}\), context extractor \(x_t\), probe size \(B\), exploration \(\eta\), ridge \(\lambda_0\), cost penalty \(\lambda_c\), discount \(\rho\), reward weights \(\alpha,\beta\), preferred depth \(d_0\)
\ENSURE Learned parameters \(\{\widehat{\theta}_k\}\), decision log

\STATE Precompute normalized costs \(\tilde C_k\) for all arms.
\FOR{each arm \(k=1,\ldots,K\)}
    \STATE \(A_k\leftarrow \lambda_0 I_p\), \(b_k\leftarrow 0\), \(\widehat{\theta}_k\leftarrow A_k^{-1}b_k\)
\ENDFOR

\FOR{decision round \(t=1,\ldots,T\)}
    \STATE Observe context \(x_t\) from MNR features, circuit statistics, and recent feedback.
    \FOR{each arm \(k=1,\ldots,K\)}
        \STATE \(\mathrm{Score}_k \leftarrow x_t^\top\widehat{\theta}_k+\eta\sqrt{x_t^\top A_k^{-1}x_t}-\lambda_c\tilde C_k\)
    \ENDFOR
    \STATE \(k_t\leftarrow \arg\max_k \mathrm{Score}_k\)
    \STATE Sample probe batch \(\mathcal{D}_t\) of size \(B\).
    \STATE Run ZNE with \(\Lambda_{k_t}\) to obtain \(\widehat{E}_{t,k_t}(0)\) and task predictions.
    \STATE Compute \(\mathrm{CE}^{\mathrm{ZNE}}_{t,k_t}\), \(\mathrm{Var}[\widehat{E}_{t,k_t}(0)]\), and reward \(R_{t,k_t}\) using Eq.~\eqref{eq:reward_detailed_cmab_zne}.
    \STATE Update selected arm:
    \[
    A_{k_t}\leftarrow \rho A_{k_t}+x_tx_t^\top,\quad
    b_{k_t}\leftarrow \rho b_{k_t}+R_{t,k_t}x_t,\quad
    \widehat{\theta}_{k_t}\leftarrow A_{k_t}^{-1}b_{k_t}.
    \]
    \STATE Apply \(\Lambda_{k_t}\) to the main workload until the next decision round and log \((t,x_t,k_t,R_{t,k_t})\).
\ENDFOR
\STATE \textbf{return} \(\{\widehat{\theta}_k\}\), decision log.
\end{algorithmic}
\end{algorithm}

\textbf{Offline exploration and hardware exploitation.}
In offline replay, expectation-value dumps are available for all scales, so counterfactual rewards can be reconstructed for unselected arms and used for regret, oracle, and arm-selection analysis. In real hardware deployment, only the selected arm is evaluated on a small probe batch, and the learned policy applies that arm to the main workload until the next decision round.

{\color{black}
\subsection{Theoretical Analysis of Cost-Aware CMAB--ZNE Selection}
\label{subsec:theory_cmab_zne}

The following analysis focuses on the online fold-set selection component of CMAB--ZNE. It does not assume that ZNE completely removes all physical noise. Rather, it shows that when the reward signal reflects both extrapolation quality and execution cost, the proposed contextual bandit policy learns to select near-optimal ZNE folding arms with sublinear regret. For clarity, the regret analysis is given for the non-discounted LinUCB update, i.e., $\rho=1$ in Algorithm~1. In practice, the discounted case is used to improve adaptation under drift.

\begin{assumption}[Bounded context and reward]
\label{ass:bounded_context_reward}
The context vector satisfies $\|x_t\|_2\leq L$ for every decision round $t\in\{1,\ldots,T\}$. The observed reward satisfies $r_{t,k}\in[-1,1]$ for every ZNE arm $k\in\{1,\ldots,K\}$.
\end{assumption}

\begin{assumption}[Linear contextual reward model]
\label{ass:linear_reward}
For each arm $k$, the expected reward under context $x_t$ is
\begin{equation}
    \mathbb{E}[r_{t,k}\mid x_t]
    =
    x_t^{\top}\theta_k^{\star},
    \label{eq:linear_reward_model}
\end{equation}
where $\theta_k^{\star}\in\mathbb{R}^{p}$ is an unknown arm-specific parameter vector satisfying $\|\theta_k^{\star}\|_2\leq S$. The observed reward is
\begin{equation}
    r_{t,k}
    =
    x_t^{\top}\theta_k^{\star}
    +
    \epsilon_{t,k},
    \label{eq:reward_noise_model}
\end{equation}
where $\epsilon_{t,k}$ is conditionally zero-mean and $\sigma$-sub-Gaussian.
\end{assumption}

\begin{assumption}[Known normalized execution cost]
\label{ass:cost_aware_utility}
The normalized execution cost $\widetilde{C}_k\in[0,1]$ for each arm $k$ is known. At round $t$, the cost-aware utility of arm $k$ is
\begin{equation}
    u_{t,k}
    =
    x_t^{\top}\theta_k^{\star}
    -
    \lambda_c\widetilde{C}_k,
    \label{eq:cost_aware_utility}
\end{equation}
where $\lambda_c\geq0$ controls the execution-cost penalty.
\end{assumption}

\begin{definition}[Cost-aware regret]
\label{def:cost_aware_regret}
Let $k_t$ be the arm selected by CMAB--ZNE, and let
\begin{equation}
    k_t^{\star}
    =
    \arg\max_{k\in\{1,\ldots,K\}}
    \left(
    x_t^{\top}\theta_k^{\star}
    -
    \lambda_c\widetilde{C}_k
    \right)
\end{equation}
be the ideal cost-aware arm at round $t$. The cumulative cost-aware regret after $T$ rounds is
\begin{equation}
\begin{aligned}
    R_T
    =
    \sum_{t=1}^{T}
    \Big[
    &
    \left(
    x_t^{\top}\theta_{k_t^{\star}}^{\star}
    -
    \lambda_c\widetilde{C}_{k_t^{\star}}
    \right)
    -
    \left(
    x_t^{\top}\theta_{k_t}^{\star}
    -
    \lambda_c\widetilde{C}_{k_t}
    \right)
    \Big].
\end{aligned}
\label{eq:cost_aware_regret}
\end{equation}
\end{definition}

\begin{lemma}[Confidence bound for arm reward estimation]
\label{lem:confidence_bound}
For each arm $k$, define
\begin{equation}
    A_{t,k}
    =
    \lambda_0 I
    +
    \sum_{\tau<t:k_\tau=k}x_\tau x_\tau^{\top},
    \qquad
    b_{t,k}
    =
    \sum_{\tau<t:k_\tau=k}r_{\tau,k}x_\tau,
\end{equation}
where $\lambda_0>0$. The ridge estimator is
\begin{equation}
    \widehat{\theta}_{t,k}
    =
    A_{t,k}^{-1}b_{t,k}.
\end{equation}
Under Assumptions~\ref{ass:bounded_context_reward} and~\ref{ass:linear_reward}, with probability at least $1-\delta$, for every $t\in\{1,\ldots,T\}$ and all $k\in\{1,\ldots,K\}$,
\begin{equation}
    \left|
    x_t^{\top}
    \left(
    \widehat{\theta}_{t,k}
    -
    \theta_k^{\star}
    \right)
    \right|
    \leq
    \beta_T
    \sqrt{x_t^{\top}A_{t,k}^{-1}x_t},
    \label{eq:confidence_bound}
\end{equation}
where
\begin{equation}
    \beta_T
    =
    \sigma
    \sqrt{
    2\log
    \left(
    \frac{K}{\delta}
    \right)
    +
    p\log
    \left(
    1+\frac{T L^2}{\lambda_0 p}
    \right)
    }
    +
    \sqrt{\lambda_0}S .
    \label{eq:beta_T}
\end{equation}
\end{lemma}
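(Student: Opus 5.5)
The plan is to reduce the statement to the self-normalized martingale bound of Abbasi-Yadkori, P\'al and Szepesv\'ari (2011), applied to each arm separately, and then take a union bound over the $K$ arms.

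Fix an arm $k$ and consider only the rounds in which $k$ was selected. Write $\epsilon_{\tau,k}=r_{\tau,k}-x_\tau^{\top}\theta_k^\star$, as in Assumption~\ref{ass:linear_reward}. Set
\[
\xi_{t,k}=\sum_{\tau<t:k_\tau=k}\epsilon_{\tau,k}x_\tau .
\]
Substituting into $b_{t,k}$ and using $A_{t,k}=\lambda_0 I+\sum x_\tau x_\tau^\top$ gives the standard decomposition
\[
\widehat\theta_{t,k}-\theta_k^\star=A_{t,k}^{-1}\xi_{t,k}-\lambda_0A_{t,k}^{-1}\theta_k^\star .
\]
By Cauchy--Schwarz in the $A_{t,k}^{-1}$ inner product,
\[
\bigl|x_t^\top(\widehat\theta_{t,k}-\theta_k^\star)\bigr|\le\sqrt{x_t^\top A_{t,k}^{-1}x_t}\,\Bigl(\|\xi_{t,k}\|_{A_{t,k}^{-1}}+\lambda_0\|\theta_k^\star\|_{A_{t,k}^{-1}}\Bigr).
\]
The second term in the bracket is at most $\sqrt{\lambda_0}S$, because $A_{t,k}\succeq\lambda_0 I$ gives $\|\theta\|_{A^{-1}}\le\|\theta\|_2/\sqrt{\lambda_0}$. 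So the whole task is bounding $\|\xi_{t,k}\|_{A_{t,k}^{-1}}$ uniformly in $t$.

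For that I would invoke the self-normalized inequality. The filtration is generated by past contexts, choices and rewards, with $x_t$ and $k_t$ measurable before $r_{t,k_t}$ is revealed. The noise is conditionally $\sigma$-sub-Gaussian by Assumption~\ref{ass:linear_reward}. The rounds with $k_\tau=k$ are chosen predictably, so the indicator-weighted sum $\sum_\tau \mathbf 1\{k_\tau=k\}\epsilon_{\tau,k}x_\tau$ is still a martingale with predictable design $\mathbf 1\{k_\tau=k\}x_\tau$.

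The inequality then states that with probability at least $1-\delta'$, simultaneously for all $t$,
\[
\|\xi_{t,k}\|_{A_{t,k}^{-1}}^2\le 2\sigma^2\log\!\Bigl(\det(A_{t,k})^{1/2}\det(\lambda_0I)^{-1/2}/\delta'\Bigr).
\]
This uses the method of mixtures with a Gaussian prior and Ville's maximal inequality, and the "for all $t$" comes for free from the stopping-time argument.

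Next I would bound the determinant ratio by AM--GM on the eigenvalues. Using $\mathrm{tr}(A_{t,k})\le\lambda_0p+TL^2$, which follows from $\|x_\tau\|\le L$ in Assumption~\ref{ass:bounded_context_reward},
\[
\log\frac{\det A_{t,k}}{\det\lambda_0 I}\le p\log\Bigl(1+\frac{TL^2}{\lambda_0p}\Bigr).
\]
Setting $\delta'=\delta/K$ and taking a union bound over arms gives
\[
\|\xi_{t,k}\|_{A_{t,k}^{-1}}\le\sigma\sqrt{2\log(K/\delta)+p\log\bigl(1+TL^2/(\lambda_0p)\bigr)}.
\]
Adding the bias term $\sqrt{\lambda_0}S$ yields exactly $\beta_T$ in Eq.~\eqref{eq:beta_T}.

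The main obstacle is justifying the self-normalized bound under adaptive arm selection and per-arm subsampling. Each arm's data set is random and depends on earlier rewards of all arms, so a naive fixed-design concentration argument fails. I would handle this by writing everything over the full time index with the predictable indicator weights above. Then the supermartingale $\exp(\langle\theta,\xi\rangle-\tfrac{\sigma^2}{2}\|\theta\|^2_{A-\lambda_0 I})$ remains valid, and the mixture argument goes through unchanged.

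A minor bookkeeping point is that the paper's $b_{t,k}$ uses $r_{\tau,k}$ only at selected rounds, which is consistent with this setup. The bounded-reward part of Assumption~\ref{ass:bounded_context_reward} is not needed beyond implying sub-Gaussianity.
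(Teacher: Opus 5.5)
Your proposal is correct and follows essentially the same route as the paper. It uses the same decomposition of $\widehat\theta_{t,k}-\theta_k^\star$ into a noise term and a regularization-bias term, the self-normalized inequality with a union bound over the $K$ arms ($\delta'=\delta/K$), the trace/AM--GM determinant bound, and Cauchy--Schwarz, arriving at $\beta_T$. The differences are cosmetic: you apply Cauchy--Schwarz term by term instead of first bounding $\|\widehat\theta_{t,k}-\theta_k^\star\|_{A_{t,k}}$, and you spell out the predictable-indicator martingale justification for adaptive arm selection and the uniform-in-$t$ guarantee, which the paper leaves implicit.
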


\begin{proof}
Fix an arm $k$, and let $\mathcal{T}_{t,k}=\{\tau<t:k_\tau=k\}$ denote the set of previous rounds in which arm $k$ was selected. For each $\tau\in\mathcal{T}_{t,k}$, the reward model gives
\begin{equation}
    r_{\tau,k}
    =
    x_\tau^{\top}\theta_k^{\star}
    +
    \epsilon_{\tau,k}.
\end{equation}
Using the definition of $b_{t,k}$, we obtain
\begin{equation}
\begin{aligned}
    b_{t,k}
    &=
    \sum_{\tau\in\mathcal{T}_{t,k}}
    r_{\tau,k}x_\tau  \\
    &=
    \sum_{\tau\in\mathcal{T}_{t,k}}
    \left(
    x_\tau^{\top}\theta_k^{\star}
    +
    \epsilon_{\tau,k}
    \right)x_\tau  \\
    &=
    \left(
    \sum_{\tau\in\mathcal{T}_{t,k}}
    x_\tau x_\tau^{\top}
    \right)\theta_k^{\star}
    +
    \sum_{\tau\in\mathcal{T}_{t,k}}
    \epsilon_{\tau,k}x_\tau  \\
    &=
    \left(A_{t,k}-\lambda_0 I\right)\theta_k^{\star}
    +
    \sum_{\tau\in\mathcal{T}_{t,k}}
    \epsilon_{\tau,k}x_\tau .
\end{aligned}
\end{equation}
Therefore,
\begin{equation}
\begin{aligned}
    \widehat{\theta}_{t,k}-\theta_k^{\star}
    &=
    A_{t,k}^{-1}b_{t,k}
    -
    \theta_k^{\star}  \\
    &=
    A_{t,k}^{-1}
    \left(
    \sum_{\tau\in\mathcal{T}_{t,k}}
    \epsilon_{\tau,k}x_\tau
    -
    \lambda_0\theta_k^{\star}
    \right).
\end{aligned}
\end{equation}
Taking the $A_{t,k}$-weighted norm and applying the triangle inequality yields
\begin{equation}
\begin{aligned}
    \left\|
    \widehat{\theta}_{t,k}
    -
    \theta_k^{\star}
    \right\|_{A_{t,k}}
    &\leq
    \left\|
    \sum_{\tau\in\mathcal{T}_{t,k}}
    \epsilon_{\tau,k}x_\tau
    \right\|_{A_{t,k}^{-1}}
    +
    \sqrt{\lambda_0}
    \left\|
    \theta_k^{\star}
    \right\|_2 .
\end{aligned}
\end{equation}
Using a union bound over the $K$ arms and the self-normalized concentration inequality for linear bandits, with probability at least $1-\delta$,
\begin{equation}
    \left\|
    \sum_{\tau\in\mathcal{T}_{t,k}}
    \epsilon_{\tau,k}x_\tau
    \right\|_{A_{t,k}^{-1}}
    \leq
    \sigma
    \sqrt{
    2\log
    \left(
    \frac{K}{\delta}
    \right)
    +
    \log
    \left(
    \frac{\det(A_{t,k})}
    {\det(\lambda_0 I)}
    \right)
    } .
\end{equation}
Since $\|x_t\|_2\leq L$, the determinant ratio can be bounded as
\begin{equation}
    \log
    \left(
    \frac{\det(A_{t,k})}
    {\det(\lambda_0 I)}
    \right)
    \leq
    p\log
    \left(
    1+\frac{T L^2}{\lambda_0 p}
    \right).
\end{equation}
Using $\|\theta_k^{\star}\|_2\leq S$, we have
\begin{equation}
    \left\|
    \widehat{\theta}_{t,k}
    -
    \theta_k^{\star}
    \right\|_{A_{t,k}}
    \leq
    \beta_T .
\end{equation}
Finally, by Cauchy--Schwarz in the $A_{t,k}$ norm,
\begin{equation}
\begin{aligned}
    \left|
    x_t^{\top}
    \left(
    \widehat{\theta}_{t,k}
    -
    \theta_k^{\star}
    \right)
    \right|
    &\leq
    \|x_t\|_{A_{t,k}^{-1}}
    \left\|
    \widehat{\theta}_{t,k}
    -
    \theta_k^{\star}
    \right\|_{A_{t,k}} \\
    &\leq
    \beta_T
    \sqrt{x_t^{\top}A_{t,k}^{-1}x_t}.
\end{aligned}
\end{equation}
This proves the stated confidence bound.
\end{proof}

\begin{proposition}[Monotonic cost of nested ZNE arms]
\label{prop:zne_arm_cost}
Consider a nested family of ZNE arms
\begin{equation}
    \Lambda_k
    =
    \{\lambda_1^k,\ldots,\lambda_{m_k}^k\},
\end{equation}
where larger arms contain more folding scales and/or larger cumulative scale values. Define
\begin{equation}
    S_k
    =
    \sum_{j=1}^{m_k}\lambda_j^k .
\end{equation}
For fixed base circuit size $N_g$ and gate duration $\tau_{\mathrm{gate}}$, the runtime component of the ZNE execution cost is proportional to $S_k$. Furthermore, if two nested arms satisfy $m_i\leq m_j$ and $S_i\leq S_j$, then their round-trip, byte-transfer, and runtime costs satisfy
\begin{equation}
    R_i\leq R_j,\qquad
    B_i\leq B_j,\qquad
    C_i^{\mathrm{rt}}\leq C_j^{\mathrm{rt}} .
\end{equation}
\end{proposition}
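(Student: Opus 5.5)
The plan is to argue directly from the cost model already fixed in the \textbf{Arms and cost} paragraph and from the folding-scaling relation Eq.~\eqref{eq:fold-scaling}. No probabilistic machinery, and nothing from Lemma~\ref{lem:confidence_bound}, is needed. The claim is essentially an accounting identity followed by monotonicity of linear maps with nonnegative coefficients.

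\textbf{Step 1: runtime is proportional to $S_k$.} Executing arm $k$ means running, once each, the folded circuits at every $\lambda_j^k\in\Lambda_k$. By Eq.~\eqref{eq:fold-scaling}, the folded circuit at scale $\lambda$ contains $\lambda N_g$ gates and has duration $\tau_{\mathrm{circ}}(\lambda)=\lambda N_g\tau_{\mathrm{gate}}$. With a fixed shot count $S$ per folded circuit, the total runtime is
\[
C_k^{\mathrm{rt}}=\kappa\sum_{j=1}^{m_k}S\,\lambda_j^k N_g\tau_{\mathrm{gate}}=(\kappa S N_g\tau_{\mathrm{gate}})\,S_k .
\]
Here $\kappa>0$ absorbs any constant per-shot conversion. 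The prefactor does not depend on $k$ once $N_g$ and $\tau_{\mathrm{gate}}$ are fixed, which gives the proportionality $C_k^{\mathrm{rt}}\propto S_k$.

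\textbf{Step 2: the three monotonicity claims.}
\begin{itemize}
\item \emph{Round trips.} Each folded circuit is one QPU--CPU exchange, so $R_k=m_k$. Hence $m_i\le m_j$ gives $R_i\le R_j$ immediately.
\item \emph{Bytes.} From $B_k=m_k(B_{\mathrm{req}}+B_{\mathrm{resp}})$ with $B_{\mathrm{req}},B_{\mathrm{resp}}\ge0$ independent of the arm, $m_i\le m_j$ gives $B_i\le B_j$.
\item \emph{Runtime.} Step 1 with a nonnegative common prefactor and $S_i\le S_j$ gives $C_i^{\mathrm{rt}}\le C_j^{\mathrm{rt}}$.
\end{itemize}

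I would also note that nestedness alone, $\Lambda_i\subseteq\Lambda_j$, already implies both hypotheses. It gives $m_i\le m_j$ directly. Since every scale satisfies $\lambda=1+2n\ge1>0$, summing over the extra scales gives $S_i\le S_j$. So the hypotheses are consistent with the nested family described in the statement.

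\textbf{Main obstacle.} The only delicate point is modeling rather than algebra. The byte count is stated only approximately ($B_k\approx\cdot$), and in practice the request payload grows with circuit length, so it could depend on $\lambda$. I would handle this by stating explicitly that $B_{\mathrm{req}}$ and $B_{\mathrm{resp}}$ are taken as arm-independent constants, as in the cost paragraph.

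If instead one lets $B_{\mathrm{req}}(\lambda)$ be nondecreasing in $\lambda$, say affine, $B_{\mathrm{req}}(\lambda)=b_0+b_1\lambda$ with $b_0,b_1\ge0$, then
\[
B_k=m_k(b_0+B_{\mathrm{resp}})+b_1S_k,
\]
which is again monotone under the joint hypothesis $m_i\le m_j$ and $S_i\le S_j$. This also explains why the statement requires both inequalities rather than only one.

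As a final remark, $\tilde C_k$ in Eq.~\eqref{eq:combined-cost} is a nonnegative combination of $R_k$, $B_k$, and $C_k^{\mathrm{rt}}$ with a common normalizer. It is therefore also monotone, $\tilde C_i\le\tilde C_j$, which is the form used by the cost penalty in Assumption~\ref{ass:cost_aware_utility}.
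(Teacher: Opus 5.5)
Your proposal is correct and follows essentially the same accounting argument as the paper: summing the folded gate counts $\lambda_j^k N_g$ over the arm gives runtime proportional to $N_g\tau_{\mathrm{gate}}S_k$, and one submission and result transfer per scale makes round trips and bytes nondecreasing in $m_k$. Your extra observations are valid refinements that the paper does not include: $\Lambda_i\subseteq\Lambda_j$ already implies both hypotheses because every scale is at least $1$, the affine $\lambda$-dependent byte model stays monotone, and $\tilde C_k$ is therefore monotone as well.
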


\begin{proof}
Under unitary folding, a base circuit with $N_g$ gates is transformed into a folded circuit with approximately $\lambda N_g$ gates at scale $\lambda$. Therefore, executing all scales in arm $\Lambda_k$ requires the effective gate count
\begin{equation}
    N_{g,k}^{\mathrm{eff}}
    =
    \sum_{j=1}^{m_k}
    \lambda_j^k N_g
    =
    N_g S_k .
\end{equation}
For fixed gate duration $\tau_{\mathrm{gate}}$, the runtime component satisfies
\begin{equation}
    C_k^{\mathrm{rt}}
    \propto
    N_g\tau_{\mathrm{gate}}S_k .
\end{equation}
Consequently, if $S_i\leq S_j$, then $C_i^{\mathrm{rt}}\leq C_j^{\mathrm{rt}}$. In addition, each scale in a ZNE arm requires a circuit submission and a corresponding result transfer. Thus, the number of round trips and transferred bytes are nondecreasing in the number of scales $m_k$. Therefore, for nested arms with $m_i\leq m_j$ and $S_i\leq S_j$, we also have $R_i\leq R_j$ and $B_i\leq B_j$. This proves the claim.
\end{proof}

\begin{theorem}[Sublinear cost-aware regret of CMAB--ZNE]
\label{thm:cmab_zne_regret}
Suppose Assumptions~\ref{ass:bounded_context_reward}--\ref{ass:cost_aware_utility} hold. If CMAB--ZNE selects
\begin{equation}
    k_t
    =
    \arg\max_{k}
    \left[
    x_t^{\top}\widehat{\theta}_{t,k}
    +
    \beta_T
    \sqrt{x_t^{\top}A_{t,k}^{-1}x_t}
    -
    \lambda_c\widetilde{C}_k
    \right],
    \label{eq:cmab_zne_selection_theory}
\end{equation}
then, with probability at least $1-\delta$, the cumulative cost-aware regret satisfies
\begin{equation}
    R_T
    =
    \mathcal{O}
    \left(
    \beta_T
    \sqrt{
    T K p
    \log
    \left(
    1+\frac{T L^2}{\lambda_0 p}
    \right)}
    \right).
    \label{eq:cmab_zne_regret_bound}
\end{equation}
Consequently, as $T$ increases, the average regret satisfies $R_T/T\rightarrow0$.
\end{theorem}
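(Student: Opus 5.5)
The proof follows the standard optimism-in-the-face-of-uncertainty argument for disjoint LinUCB, carried out on the good event $\mathcal{E}$ of Lemma~\ref{lem:confidence_bound}. That event holds with probability at least $1-\delta$ and gives, simultaneously for all $t$ and $k$,
\[
|x_t^\top(\widehat{\theta}_{t,k}-\theta_k^\star)|\le\beta_T\|x_t\|_{A_{t,k}^{-1}}.
\]
The key observation is that the cost term $\lambda_c\widetilde{C}_k$ is known and deterministic by Assumption~\ref{ass:cost_aware_utility}. It therefore shifts both the true utility $u_{t,k}$ and the optimistic index by the same constant. As a result, the confidence bound transfers verbatim to the cost-aware utility: the index
\[
U_{t,k}=x_t^\top\widehat{\theta}_{t,k}+\beta_T\|x_t\|_{A_{t,k}^{-1}}-\lambda_c\widetilde{C}_k
\]
satisfies $u_{t,k}\le U_{t,k}\le u_{t,k}+2\beta_T\|x_t\|_{A_{t,k}^{-1}}$ on $\mathcal{E}$.

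\textbf{Per-round regret.} By optimism and the selection rule \eqref{eq:cmab_zne_selection_theory}, the instantaneous regret is bounded as
\[
u_{t,k_t^\star}-u_{t,k_t}\le U_{t,k_t^\star}-u_{t,k_t}\le U_{t,k_t}-u_{t,k_t}\le 2\beta_T\|x_t\|_{A_{t,k_t}^{-1}}.
\]
Since $r_{t,k}\in[-1,1]$ and $\widetilde{C}_k\in[0,1]$, the instantaneous regret is also at most a constant $c_0=2+\lambda_c$. So each term is bounded by
\[
\min\!\big(c_0,\,2\beta_T\|x_t\|_{A_{t,k_t}^{-1}}\big)\le \max(c_0,2\beta_T)\,\min\!\big(1,\|x_t\|_{A_{t,k_t}^{-1}}\big).
\]
Since $\beta_T\ge\sqrt{\lambda_0}S$ and the constant is absorbed into the $\mathcal{O}(\cdot)$, this yields a bound of order $\beta_T\min(1,\|x_t\|_{A_{t,k_t}^{-1}})$.

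\textbf{Summing the bonuses.} I will split the sum over rounds by arm. Let $T_k$ be the number of rounds in which arm $k$ is chosen, so $\sum_k T_k=T$. Applying Cauchy--Schwarz within each arm gives
\[
\sum_{t:k_t=k}\min\!\big(1,\|x_t\|_{A_{t,k}^{-1}}\big)\le\sqrt{T_k\sum_{t:k_t=k}\min\!\big(1,\|x_t\|^2_{A_{t,k}^{-1}}\big)}.
\]
For each arm, $A_{t,k}$ is updated exactly by the rank-one terms $x_tx_t^\top$ on the rounds where arm $k$ is selected. The elliptical potential lemma therefore applies arm-wise, using $\min(1,y)\le2\log(1+y)$ and the matrix determinant lemma:
\[
\sum_{t:k_t=k}\min\!\big(1,\|x_t\|^2_{A_{t,k}^{-1}}\big)\le2\log\frac{\det A_{T+1,k}}{\det\lambda_0I}\le2p\log\!\Big(1+\frac{TL^2}{\lambda_0p}\Big),
\]
where the last step is the same determinant–trace bound used in Lemma~\ref{lem:confidence_bound}. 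A second Cauchy--Schwarz across arms gives $\sum_k\sqrt{T_k}\le\sqrt{KT}$. Combining these yields
\[
R_T\le \mathcal{O}\!\Big(\beta_T\sqrt{TKp\log\!\big(1+\tfrac{TL^2}{\lambda_0p}\big)}\Big),
\]
which is exactly \eqref{eq:cmab_zne_regret_bound}.

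\textbf{Vanishing average regret.} The quantity $\beta_T$ from \eqref{eq:beta_T} grows only like $\sqrt{p\log T}$. Hence $R_T=\mathcal{O}(p\sqrt{KT}\log T)$, and $R_T/T\to0$.

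\textbf{Main obstacle.} The step I expect to need the most care is the arm-wise elliptical potential argument. It requires that $A_{t,k}$ in the selection rule coincides with the matrix of Lemma~\ref{lem:confidence_bound}, i.e., the non-discounted case $\rho=1$. It also requires handling $\|x_t\|_{A^{-1}}$ possibly exceeding $1$ when $L^2>\lambda_0$; this is why the bonus is truncated via $\min(1,\cdot)$ together with the bounded-reward assumption, rather than summed directly. One further point must be addressed: the self-normalized bound in the lemma must hold uniformly over $t$ on a single event. This is guaranteed by the lemma as stated, which covers all $t$ and all $k$ simultaneously with probability $1-\delta$.
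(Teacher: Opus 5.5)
Your proof is correct and follows the paper's argument. You use optimism on the event of Lemma~\ref{lem:confidence_bound} to get $\Delta_t\le 2\beta_T\|x_t\|_{A_{t,k_t}^{-1}}$, then Cauchy--Schwarz and the arm-wise elliptical potential lemma summed over the $K$ arms; your per-arm Cauchy--Schwarz followed by $\sum_k\sqrt{T_k}\le\sqrt{KT}$ is equivalent to the paper's single global Cauchy--Schwarz. Your $\min(1,\cdot)$ truncation via bounded rewards and your note that $\beta_T=\mathcal{O}(\sqrt{p\log T})$ when deducing $R_T/T\to0$ are small refinements: the paper applies the untruncated elliptical-potential bound directly, which implicitly requires $L^2\le\lambda_0$ or an extra $\max(1,L^2/\lambda_0)$ factor.
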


\begin{proof}
Let $k_t$ be the arm selected by CMAB--ZNE at round $t$, and let $k_t^{\star}$ be the best cost-aware arm. From Lemma~\ref{lem:confidence_bound}, with probability at least $1-\delta$, for every arm $k$,
\begin{equation}
    x_t^{\top}\theta_k^{\star}
    \leq
    x_t^{\top}\widehat{\theta}_{t,k}
    +
    \beta_T
    \sqrt{x_t^{\top}A_{t,k}^{-1}x_t}.
\end{equation}
Because $\lambda_c\widetilde{C}_k$ is known and deterministic for each arm, subtracting the cost term preserves the optimism property:
\begin{equation}
\begin{aligned}
    x_t^{\top}\theta_k^{\star}
    -
    \lambda_c\widetilde{C}_k
    \leq
    x_t^{\top}\widehat{\theta}_{t,k}
    +
    \beta_T
    \sqrt{x_t^{\top}A_{t,k}^{-1}x_t}
    -
    \lambda_c\widetilde{C}_k .
\end{aligned}
\end{equation}
Using this inequality for $k_t^{\star}$, and using the fact that CMAB--ZNE selects the arm with the largest optimistic cost-aware score, we obtain
\begin{equation}
\begin{aligned}
    x_t^{\top}\theta_{k_t^{\star}}^{\star}
    -
    \lambda_c\widetilde{C}_{k_t^{\star}}
    &\leq
    x_t^{\top}\widehat{\theta}_{t,k_t^{\star}}
    +
    \beta_T
    \sqrt{x_t^{\top}A_{t,k_t^{\star}}^{-1}x_t}
    -
    \lambda_c\widetilde{C}_{k_t^{\star}} \\
    &\leq
    x_t^{\top}\widehat{\theta}_{t,k_t}
    +
    \beta_T
    \sqrt{x_t^{\top}A_{t,k_t}^{-1}x_t}
    -
    \lambda_c\widetilde{C}_{k_t}.
\end{aligned}
\end{equation}
Therefore, the instantaneous regret $\Delta_t$ satisfies
\begin{equation}
\begin{aligned}
    \Delta_t
    &=
    \left(
    x_t^{\top}\theta_{k_t^{\star}}^{\star}
    -
    \lambda_c\widetilde{C}_{k_t^{\star}}
    \right)
    -
    \left(
    x_t^{\top}\theta_{k_t}^{\star}
    -
    \lambda_c\widetilde{C}_{k_t}
    \right) \\
    &\leq
    x_t^{\top}
    \left(
    \widehat{\theta}_{t,k_t}
    -
    \theta_{k_t}^{\star}
    \right)
    +
    \beta_T
    \sqrt{x_t^{\top}A_{t,k_t}^{-1}x_t} \\
    &\leq
    2\beta_T
    \sqrt{x_t^{\top}A_{t,k_t}^{-1}x_t},
\end{aligned}
\end{equation}
where the last step again follows from Lemma~\ref{lem:confidence_bound}. Summing over all rounds gives
\begin{equation}
    R_T
    =
    \sum_{t=1}^{T}\Delta_t
    \leq
    2\beta_T
    \sum_{t=1}^{T}
    \sqrt{x_t^{\top}A_{t,k_t}^{-1}x_t}.
\end{equation}
Applying Cauchy--Schwarz,
\begin{equation}
    R_T
    \leq
    2\beta_T
    \sqrt{
    T
    \sum_{t=1}^{T}
    x_t^{\top}A_{t,k_t}^{-1}x_t
    }.
\end{equation}
For each arm, the elliptical potential lemma gives
\begin{equation}
    \sum_{t:k_t=k}
    x_t^{\top}A_{t,k}^{-1}x_t
    =
    \mathcal{O}
    \left(
    p\log
    \left(
    1+\frac{T L^2}{\lambda_0 p}
    \right)
    \right).
\end{equation}
Summing this bound over the $K$ arms yields
\begin{equation}
    \sum_{t=1}^{T}
    x_t^{\top}A_{t,k_t}^{-1}x_t
    =
    \mathcal{O}
    \left(
    Kp\log
    \left(
    1+\frac{T L^2}{\lambda_0 p}
    \right)
    \right).
\end{equation}
Substituting this bound into the preceding inequality proves Eq.~\eqref{eq:cmab_zne_regret_bound}. Since the regret grows sublinearly in $T$, the average regret satisfies $R_T/T\rightarrow0$.
\end{proof}

\begin{remark}
Theorem~\ref{thm:cmab_zne_regret} shows that CMAB--ZNE asymptotically approaches the optimal cost-aware folding-arm selection strategy in hindsight. In the context of this paper, this means that the proposed policy can gradually identify folding arms that balance extrapolation quality and execution overhead, rather than repeatedly using fixed or grid-searched ZNE schedules. The theorem supports the adaptive selection mechanism; it does not claim that ZNE eliminates all higher-order noise effects or that the Lindblad simulator precisely matches physical hardware.
\end{remark}
}

\section{Evaluation}
\label{sec:experiment}

\subsection{Experimental Setup}

\subsubsection{Datasets}

We evaluate on CIFAR-10~\cite{krizhevsky2009learning} and EuroSAT~\cite{helber2019eurosat}. CIFAR-10 is a standard 10-class image-classification benchmark containing \(32\times32\) RGB images, while EuroSAT is a 10-class remote-sensing dataset derived from Sentinel-2 imagery. Together, these datasets allow evaluation on both controlled natural-image classification and real-world domain-specific visual recognition.


\subsubsection{Noise Model}

We implement a customized noise-injected variational quantum circuit in PennyLane to emulate superconducting-device noise under controlled conditions. Noise is inserted both before and after the variational entangling layers to capture cumulative degradation during state preparation, entanglement, and measurement. Specifically, we use depolarizing, bit-flip, phase-flip, amplitude-damping, and phase-damping channels, with a configurable \texttt{noise\_factor} controlling the overall strength. These channels capture the main NISQ error sources: state mixing, computational-basis flips, phase decoherence, energy relaxation, and loss of off-diagonal coherence~\cite{nielsen2010quantum,preskill2018quantum,wang2022quantumnat}.
We analyze gate noise using the MNR model~\cite{fellous2023optimizing}. Following open-system quantum dynamics~\cite{breuer2002theory,wiseman2009quantum} and recent hardware modeling~\cite{geller2020rigorous,fellous2023optimizing}, we use Lindblad-style decoherence to model relaxation and dephasing in superconducting qubits. Classical inputs are encoded through \textit{AngleEmbedding}, followed by noisy layers and \textit{StronglyEntanglingLayers}. The quantum layer outputs Pauli-\(Z\) expectation values, which are passed to the classifier.
\textcolor{black}{Although the main simulation uses controlled PennyLane noise channels, the proposed MNR/Lindblad model can be parameterized by calibration quantities commonly reported by superconducting QPUs, such as $T_1$, $T_2$, gate errors, and readout errors. These quantities allow time-varying rates to be instantiated across calibration windows and enable direct comparison between simulated noisy estimates and real QPU measurements. This connection is further supported by our additional Rigetti hardware validation in Section~\ref{sec:hardware_validation}.}

\textbf{Simulating Shot Noise.}
Shot noise is simulated by sampling analytic Pauli-\(Z\) expectations \(q\in[-1,1]\). Each value is converted to a \(+1\) outcome probability \(p=(1+q)/2\), and counts are sampled as \(k\sim\mathrm{Binomial}(S,p)\) under shot budget \(S\). The empirical estimate is then
\[
\hat q = \frac{2k}{S}-1,
\qquad
\mathbb{E}[\hat q]=q,
\qquad
\mathrm{Var}(\hat q)=\frac{1-q^2}{S}.
\]
Thus, finite-shot evaluation introduces variance that scales as \(O(1/S)\), creating a practical trade-off between estimator stability and hardware/runtime cost.

\subsubsection{Experimental Environment}

\begin{table}[t]
\centering
\footnotesize
\caption{\footnotesize Experimental environment and model configuration.}
\label{tab:env}
\begin{tabular}{ll}
\toprule
\textbf{Aspect} & \textbf{Specification} \\
\midrule
Frameworks & PyTorch, TorchVision, PennyLane, rasterio \\
Hardware & RTX 4090 GPU, 64 GB RAM, Ubuntu 22.04 \\
Batch size / epochs & 64 / 50 \\
Encoders & ResNet-18 RGB encoder; 3-stage CNN MS encoder \\
Fusion & Concat \(1024\) \(\rightarrow\) Linear projection \\
Quantum device & \texttt{default.mixed} \\
Circuit settings & \(n_{\mathrm{wires}}\in\{4,8,16,32\}\), \(n_{\mathrm{layers}}\in\{3,5,7\}\) \\
Noise channels & Depol., bit/phase-flip, amp./phase-damp. \\
Shots & \(S=1024\) \\
Classifier & Linear \(n_{\mathrm{wires}}\rightarrow10\) \\
Optimizer & Adam, learning rate \(10^{-4}\) \\
\bottomrule
\end{tabular}
\vspace{-0.5cm}
\end{table}

All experiments are implemented in Python 3.10 using PyTorch for the classical components and PennyLane for the quantum circuit simulation. TorchVision is used for pretrained visual backbones, and Rasterio is used for Sentinel-2 TIF processing. Table~\ref{tab:env} summarizes the main software, hardware, and architectural settings.

For data handling, CIFAR-10 follows the standard train/test split and is processed as \(32\times32\) RGB input. EuroSAT samples include RGB images and 13-band multispectral (MS) TIF cubes. RGB inputs are resized to \(224\times224\), while MS cubes are scaled by \(10^4\) and bilinearly resampled to the same resolution. The model uses a dual-stream encoder: a ResNet-18 RGB encoder and a lightweight three-stage CNN for the MS cube, each producing a 512-dimensional embedding. The concatenated 1024-dimensional feature is projected to a latent vector compatible with the quantum layer.
{\color{black}
The quantum component uses \texttt{default.mixed}, \texttt{AngleEmbedding}, and \texttt{StronglyEntanglingLayers}. Rather than evaluating only a shallow setting, we test $n_{\mathrm{wires}}\in\{4,8,16,32\}$ and $n_{\mathrm{layers}}\in\{3,5,7\}$ to examine CMAB--ZNE under increasing circuit width and depth. Composite noise is inserted before and after the entangling layers, finite-shot perturbation is applied with $S=1024$, and Pauli-$Z$ expectations from all wires are passed to a linear classifier. This enables comparison of the same hybrid architecture across shallow, medium, and wider VQC settings.
}
The models are trained for 50 epochs using Adam with learning rate \(10^{-4}\). {\color{black} Unless otherwise stated, results are averaged over multiple random seeds and reported with mean and standard deviation to capture variability from stochastic initialization, finite-shot sampling, and noise injection. CMAB--ZNE is compared against fixed-fold and grid-search ZNE under paired runs with the same noise settings; therefore, performance differences primarily reflect the mitigation strategy rather than random trial variation.}

\subsubsection{ZNE Framework}

We apply ZNE at fixed training epochs without retraining the model at each noise level. For each epoch \(i\), the quantum head is evaluated under scaled noise factors \(\lambda\in\Lambda\), the resulting expectation values are fitted with a low-degree polynomial, and the fitted curve is evaluated at \(\lambda=0\) to obtain the noise-free estimate.

The hybrid model contains a variational quantum head using \texttt{AngleEmbedding} and \texttt{StronglyEntanglingLayers}, followed by a linear classifier. At each epoch, we save the quantum head parameters and classifier weights. During offline ZNE, inference is performed using \texttt{default.mixed} with scaled channel strength \(p(\lambda)=\lambda p_{\mathrm{base}}\), where the base noise includes bit/phase flips, depolarization, amplitude damping, and phase damping. For each epoch \(i\) and scale \(\lambda\in\{0,1,3,5,7\}\), we store the analytic expectation vector \(q^{(\lambda)}\in[-1,1]^{n_{\mathrm{wires}}}\), together with the filename and label. These dumps are the only inputs used for offline ZNE reconstruction.

For a selected arm \(\Lambda\), such as \(\{1,3\}\), \(\{1,3,5\}\), or \(\{1,3,5,7\}\), we perform wire-wise polynomial regression over \(\lambda\) and evaluate the fitted polynomial at zero:
\begin{equation}
    \hat{q}_j(0)=p_j(0), \qquad
    p_j=\arg\min_{\deg(p)\le d}
    \sum_{\lambda\in\Lambda}
    \left(p(\lambda)-q_j^{(\lambda)}\right)^2 ,
    \label{eq:offline_zne_fit}
\end{equation}
where \(j\) indexes the measured wire and \(d=\min(|\Lambda|-1,2)\). The extrapolated value is clipped to \([-1,1]\) for numerical stability, producing \(\hat{q}(0)\in\mathbb{R}^{n_{\mathrm{wires}}}\) for each sample. This reconstructed zero-noise expectation vector is then passed to the frozen classifier for prediction and reward computation.

\subsubsection{CMAB Settings}

We implement the CMAB framework from Subsection~\ref{subsec:zne-cmab} using three ZNE arms,
\[
\mathcal{A}=\{\{1,3\},\{1,3,5\},\{1,3,5,7\}\}.
\]
At each epoch \(t\), the agent selects an arm \(a_t\in\mathcal{A}\), applies it to the stored expectation-value dumps, and receives a scalar reward.
The context vector \(x_t\) is constructed from quantities known before arm selection, including the epoch index, loss and accuracy at \(\lambda=1\), variance across the \(\{1,3\}\) scales, and batch size. Therefore, context construction does not require additional circuit executions. The reward follows Eq.~\eqref{eq:reward_detailed_cmab_zne}, with target depth \(d_0=1\) and \(\alpha\in[10^{-2},10^{-1}]\), using signals such as true-class probability, per-example loss, or average extrapolated expectations. This encourages stable and shallow arms unless deeper extrapolation provides a clear gain.
{\color{black}
For offline evaluation, the simulation saves expectation-value dumps at every epoch for all tested scales $\lambda\in\{1,3,5,7\}$. Thus, even when the bandit selects only one arm $a_t$, we can reconstruct the ZNE output and compute rewards for all unselected arms under the same context. We refer to these as \textit{counterfactual rewards}. They are used only for offline replay, oracle comparison, regret computation, arm-selection analysis, and cumulative-reward reporting. In real hardware or online deployment, only the reward of the selected arm is observed.
}

\subsection{Simulation Results}
\textbf{Ablation Study.} 
{\color{black}
We first present the ablation summary to assess how circuit width and depth affect model performance. Unlike the previous shallow-only setting, we include configurations with up to 32 wires and 7 ansatz layers to better analyze the proposed approach under stronger noise accumulation. Increasing the number of wires expands the dimension of the quantum representation, while increasing the number of layers increases the number of parameterized rotations and entangling operations. The results show that wider circuits can improve expressivity, but excessive depth can also increase noise sensitivity and finite-shot variance. Therefore, the selected configuration reflects a balance among representation capacity, validation loss, and noise-induced overhead.}
\begin{table*}[t]
\color{black}
  \centering
  \caption{\footnotesize Ablation on the number of wires ($n_{\mathrm{wires}}$) and layers.
  We report the final validation loss as mean $\pm$ standard deviation over 3 seeds (lower is better). The best setting used in the main experiments is in bold.}
  \label{tab:ablation_wires_layers}
  \setlength{\tabcolsep}{4pt}
  \renewcommand{\arraystretch}{1.12}
  \begin{tabular}{c ccc ccc}
    \toprule
    & \multicolumn{3}{c}{CIFAR-10 loss} & \multicolumn{3}{c}{EuroSAT loss} \\
    \cmidrule(lr){2-4} \cmidrule(lr){5-7}
    $n_{\mathrm{wires}}$ 
    & $n_{\mathrm{layers}}{=}3$ 
    & $5$ 
    & $7$ 
    & $3$ 
    & $5$ 
    & $7$ \\
    \midrule
    4  
      & \(0.0675 \pm 0.0027\) 
      & \(0.0712 \pm 0.0029\) 
      & \(0.0784 \pm 0.0033\)
      & \(0.1569 \pm 0.0042\) 
      & \(0.1621 \pm 0.0045\) 
      & \(0.1716 \pm 0.0048\) \\

    8  
      & \(0.0288 \pm 0.0016\) 
      & \(0.0362 \pm 0.0019\) 
      & \(0.0437 \pm 0.0024\)
      & \(0.1120 \pm 0.0031\) 
      & \(0.1205 \pm 0.0034\) 
      & \(0.1298 \pm 0.0038\) \\

    16 
    & \(0.0264 \pm 0.0017\) 
      & \(0.0242 \pm 0.0015\) 
      & \(0.0318 \pm 0.0021\)
      & \(0.1058 \pm 0.0030\) 
      & \(0.1013 \pm 0.0028\) 
      & \(0.1127 \pm 0.0035\) \\

    32 
      & \(0.0248 \pm 0.0014\) 
      & \(\mathbf{0.0219 \pm 0.0012}\) 
      & \(0.0276 \pm 0.0016\)
      & \(0.1032 \pm 0.0028\) 
      & \(\mathbf{0.0978 \pm 0.0025}\) 
      & \(0.1065 \pm 0.0030\) \\
    \bottomrule
  \end{tabular}
\end{table*}
\begin{figure}[t]
  \centering

  \begin{subfigure}[t]{0.24\textwidth}
    \centering
    \includegraphics[width=\linewidth]{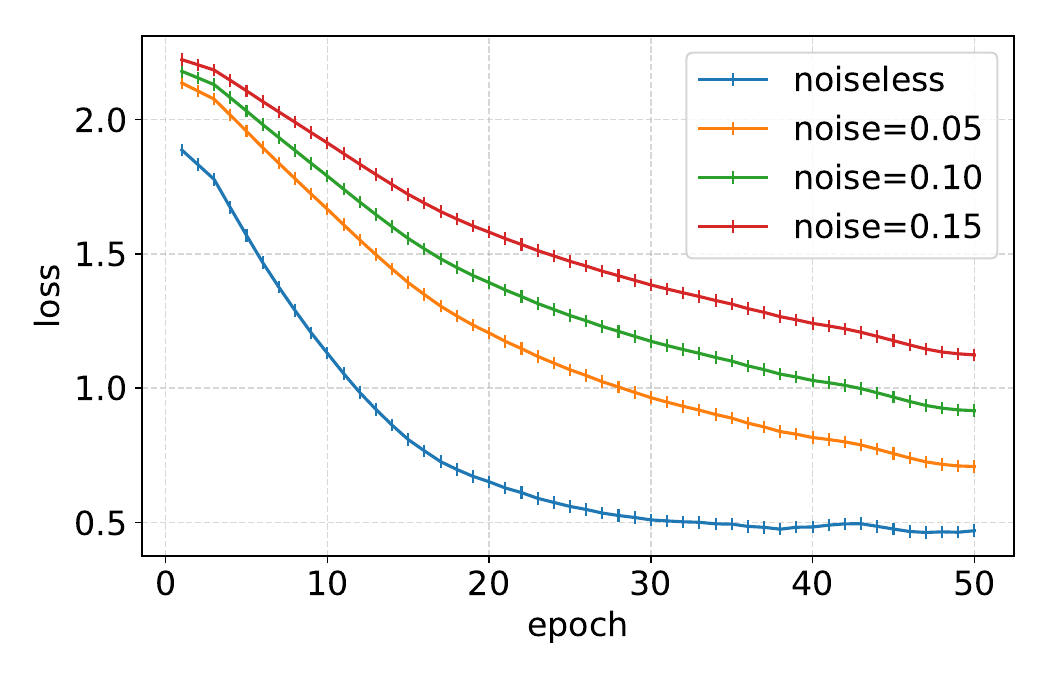}
    \vspace{-2em}
    \caption{\footnotesize CIFAR-10: loss vs. epoch}
    \label{fig:cifar-loss}
  \end{subfigure}
  \hfill
  \begin{subfigure}[t]{0.24\textwidth}
    \centering
    \includegraphics[width=\linewidth]{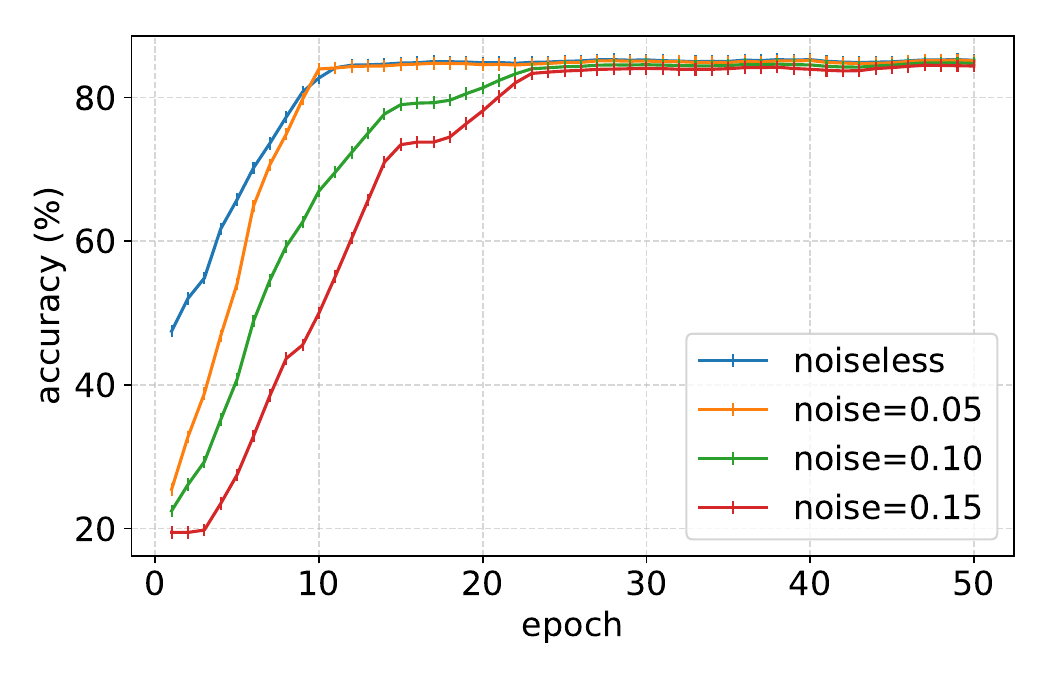}
    \vspace{-2em}
    \caption{\footnotesize CIFAR-10: accuracy vs. epoch}
    \label{fig:cifar-acc}
  \end{subfigure}

  \vspace{0.75em}

  \begin{subfigure}[t]{0.24\textwidth}
    \centering
    \includegraphics[width=\linewidth]{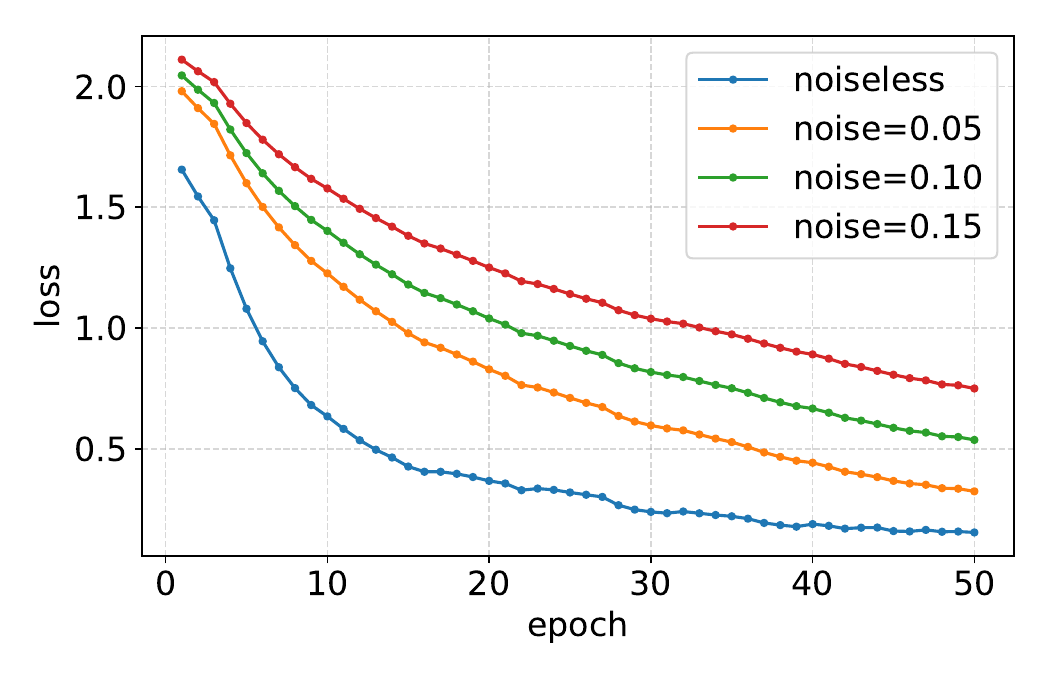}
    \vspace{-2em}
    \caption{\footnotesize EuroSAT: loss vs. epoch}
    \label{fig:eurosat-loss}
  \end{subfigure}
  \hfill
  \begin{subfigure}[t]{0.24\textwidth}
    \centering
    \includegraphics[width=\linewidth]{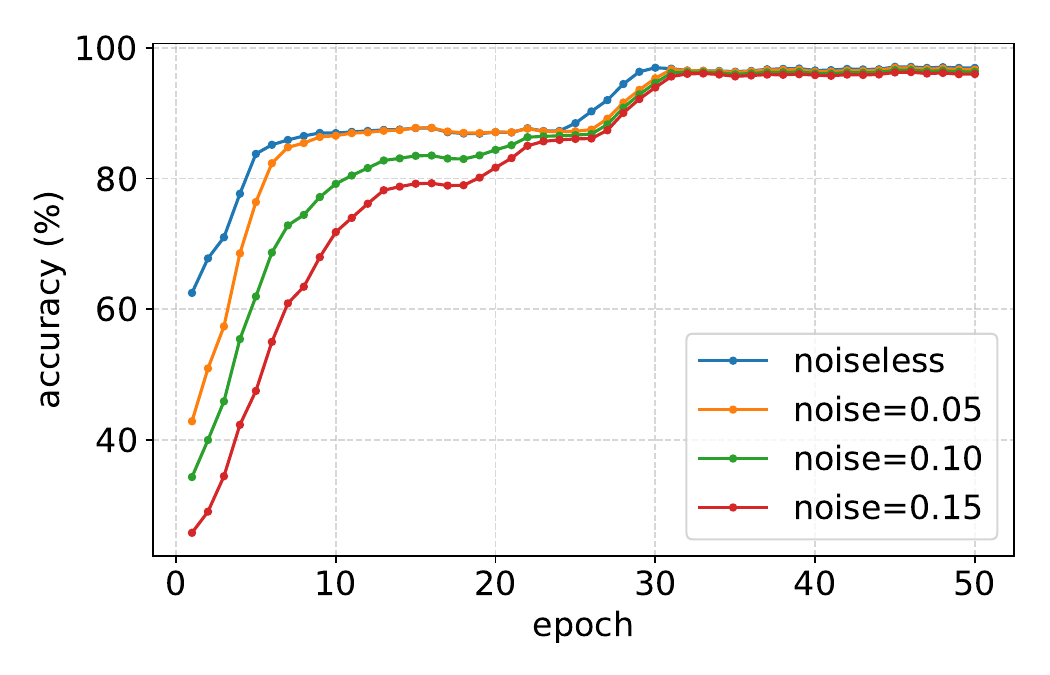}
    \vspace{-2em}
    \caption{\footnotesize EuroSAT: accuracy vs. epoch}
    \label{fig:eurosat-acc}
  \end{subfigure}

  \caption{\footnotesize Noiseless vs. noise levels (0.05, 0.10, 0.15). Each curve is smoothed with the same window as in preprocessing.}
  \label{fig:noise}
\end{figure}

\textbf{Impact of Quantum Noise.}
Fig.~\ref{fig:noise} shows training loss and accuracy against epoch for two datasets (CIFAR-10, EuroSAT) with four settings: \emph{noiseless} and noise levels $0.05$, $0.10$, and $0.15$. The $0.10$ curve is optionally imputed point-wise as the mean of the $0.05$ and $0.15$ curves.  All curves have been smoothed using the same window as in preprocessing.  Across all datasets, increasing noise consistently slows optimization—loss declines more gradually, and accuracy grows more slowly—resulting in the greatest gap across curves in the early to mid epochs.  The noiseless alternative provides the quickest loss reduction and accuracy increase, followed by $0.05$, $0.10$, and $0.15$.
Despite the initial lag caused by noise, the trajectories tend to converge to comparable terminal behavior in subsequent epochs, with only minor residual gaps between noise levels.  Throughout training, the imputed $0.10$ curve remained between the $0.05$ and $0.15$ trajectories, indicating a near-linear influence of moderate noise in this range.  Overall, the figure shows that moderate noise mostly delays convergence rather than fundamentally restricting the possible performance in our configuration, with EuroSAT showing somewhat higher asymptotic accuracy and CIFAR-10 indicating more early-epoch susceptibility to noise. 

\textbf{Effect of Measurement Shot Count.}
\begin{figure}[t]
  \centering

  \begin{subfigure}[t]{0.24\textwidth}
    \centering
    \includegraphics[width=\linewidth]{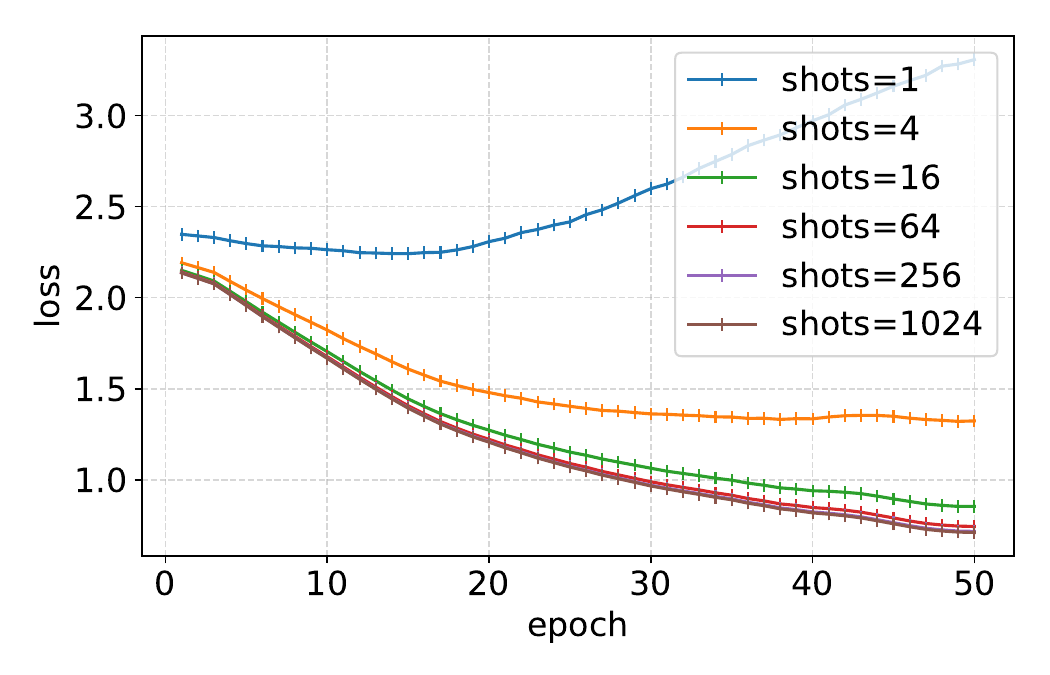}
    \vspace{-2em}
    \caption{\footnotesize CIFAR-10: loss comparison for different shots}
    \label{fig:cifar-loss-2}
  \end{subfigure}
  \hfill
  \begin{subfigure}[t]{0.24\textwidth}
    \centering
    \includegraphics[width=\linewidth]{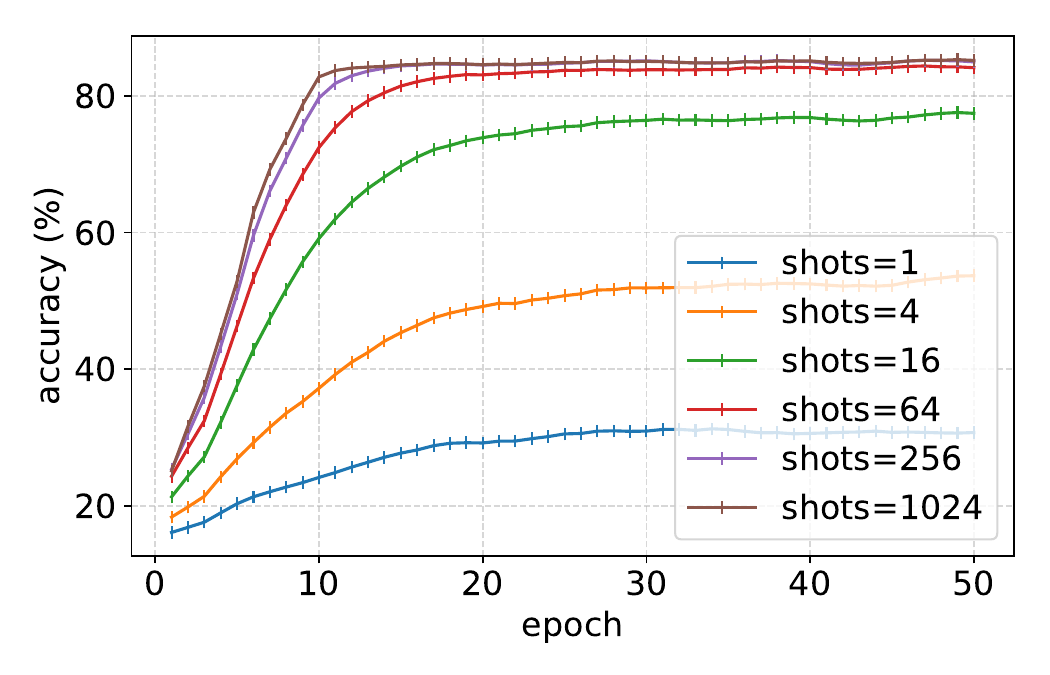}
    \vspace{-2em}
    \caption{\footnotesize CIFAR-10: accuracy comparison for different shots}
    \label{fig:cifar-acc-2}
  \end{subfigure}

  \vspace{0.75em}

  \begin{subfigure}[t]{0.24\textwidth}
    \centering
    \includegraphics[width=\linewidth]{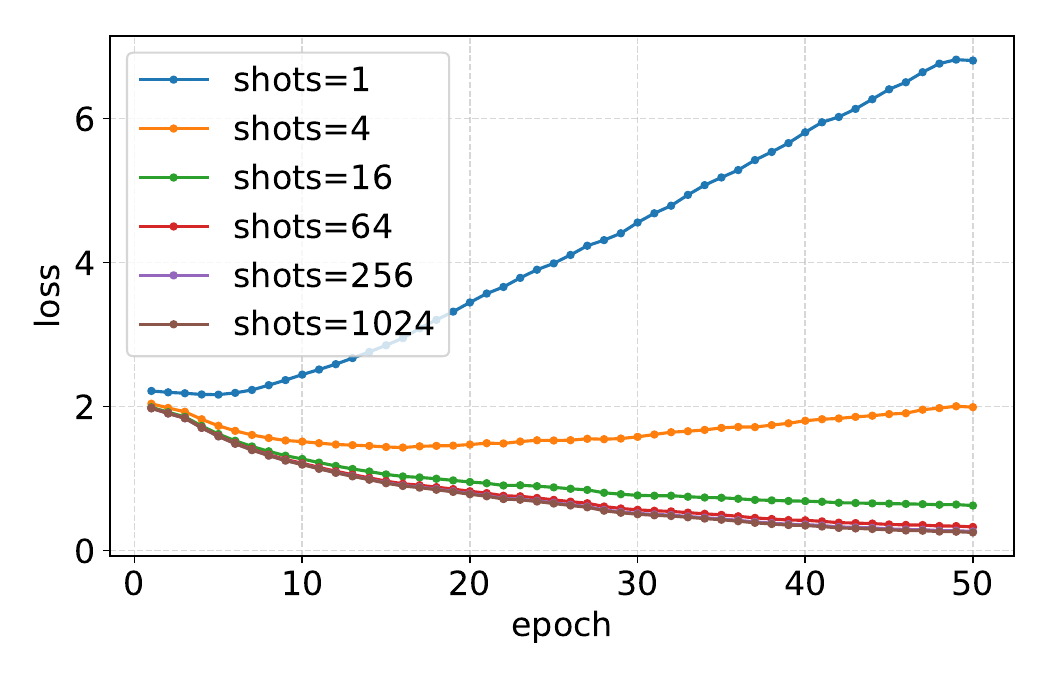}
    \vspace{-2em}
    \caption{\footnotesize EuroSAT: loss comparison for different shots}
    \label{fig:eurosat-loss-2}
  \end{subfigure}
  \hfill
  \begin{subfigure}[t]{0.24\textwidth}
    \centering
    \includegraphics[width=\linewidth]{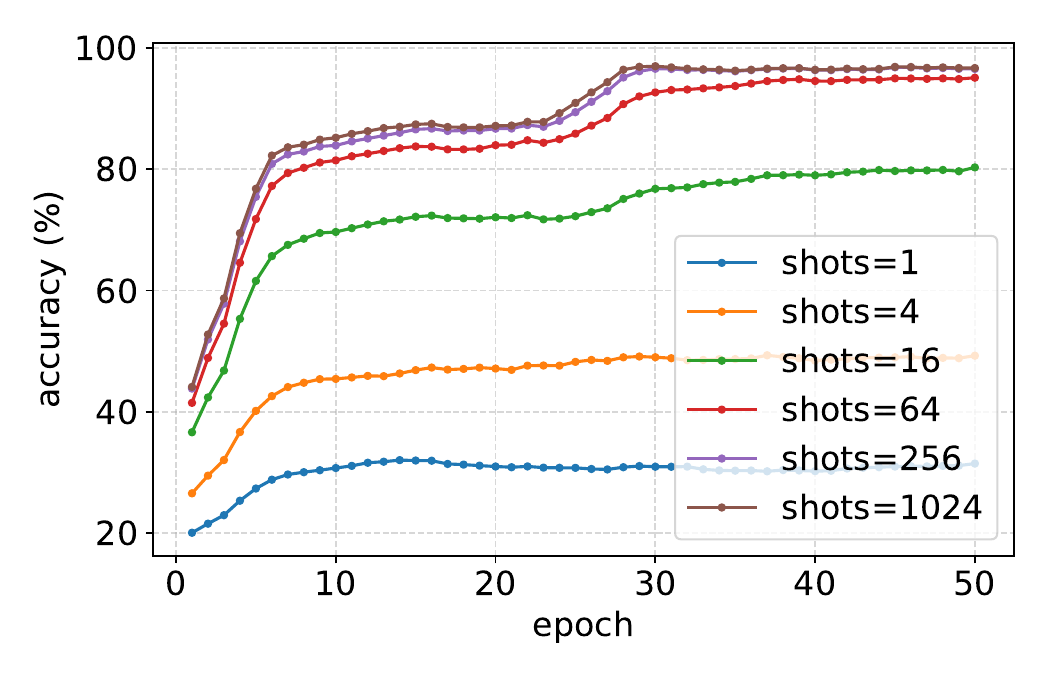}
    \vspace{-2em}
    \caption{\footnotesize EuroSAT: accuracy comparison for different shots}
    \label{fig:eurosat-acc-2}
  \end{subfigure}

  \caption{\footnotesize Effect of finite sampling (shots) on model performance at a fixed hardware noise level (0.05). Curves show shots $s =  \{1, 4, 16, 64, 256, 1024\}$; higher shot counts reduce measurement variance and yield smoother trajectories.}
  \label{fig:shots}
\end{figure}
Fig.~\ref{fig:shots} shows training loss and accuracy vs epoch on CIFAR-10 and EuroSAT with a fixed hardware noise level ($\epsilon=0.05$) and altering the number of measurement shots $s\in{1,4,16,64,256,1024}$. Across both datasets, greater $s$ results in smoother trajectories and quicker optimization: loss lowers faster, and accuracy increases faster, but very few shots (e.g., $s=1$ or $4$) show significant variation, slower progress, and, in EuroSAT, even unstable loss behavior.
As $s$ grows, performance benefits decline (curves for $s!\ge!256$ are near), but residual advantages remain noticeable, especially in early epochs and with noisier gradient estimations. The overall pattern shows that higher sampling decreases measurement variation, stabilizes training, and increases asymptotic accuracy. \emph{Overall, the figure demonstrates that employing more shots improves performance}.
\begin{figure*}
  \centering

  \begin{subfigure}[t]{0.32\textwidth}
    \centering
    \includegraphics[width=\linewidth]{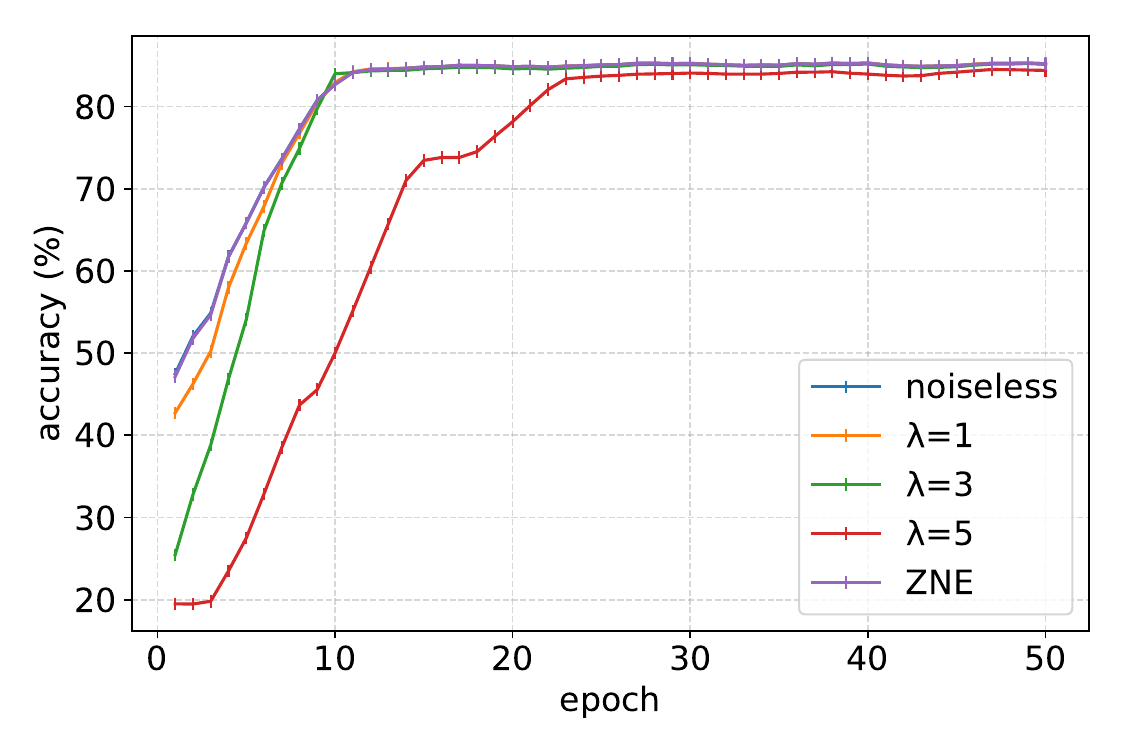}
    \subcaption{\footnotesize CIFAR-10 Accuracy}
  \end{subfigure}
  \hfill
  \begin{subfigure}[t]{0.32\textwidth}
    \centering
    \includegraphics[width=\linewidth]{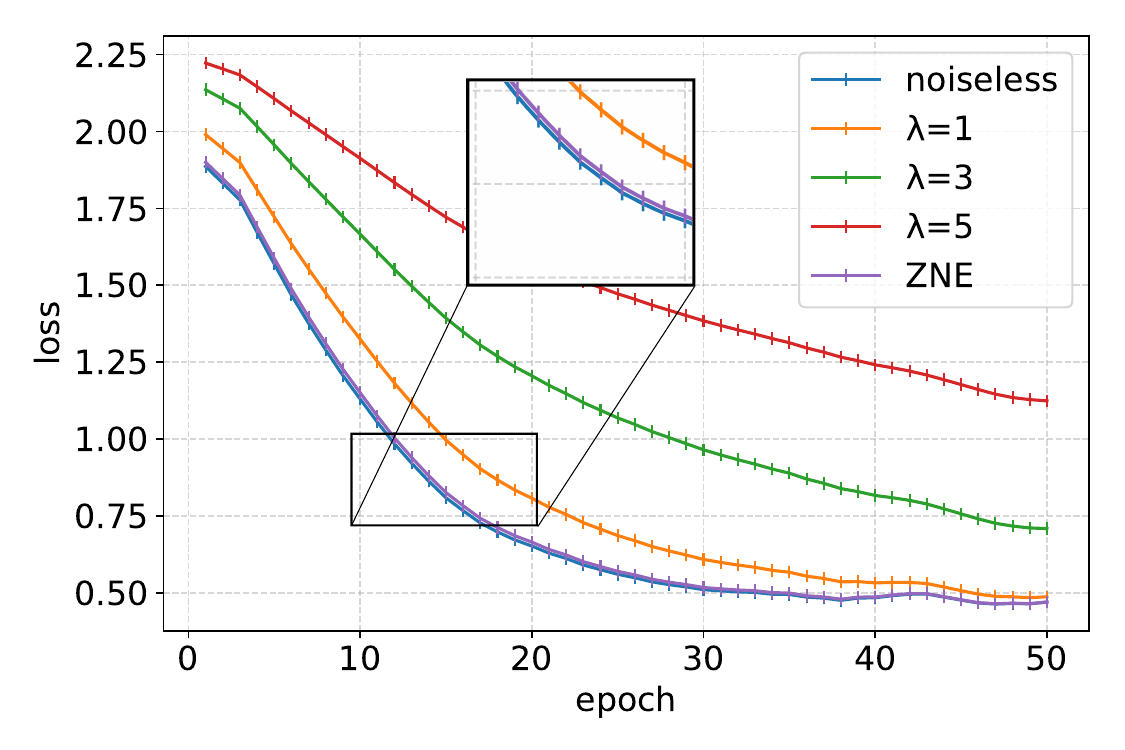}
    \subcaption{\footnotesize CIFAR-10 Loss}
  \end{subfigure}
  \hfill
  \begin{subfigure}[t]{0.32\textwidth}
    \centering
    \includegraphics[width=\linewidth]{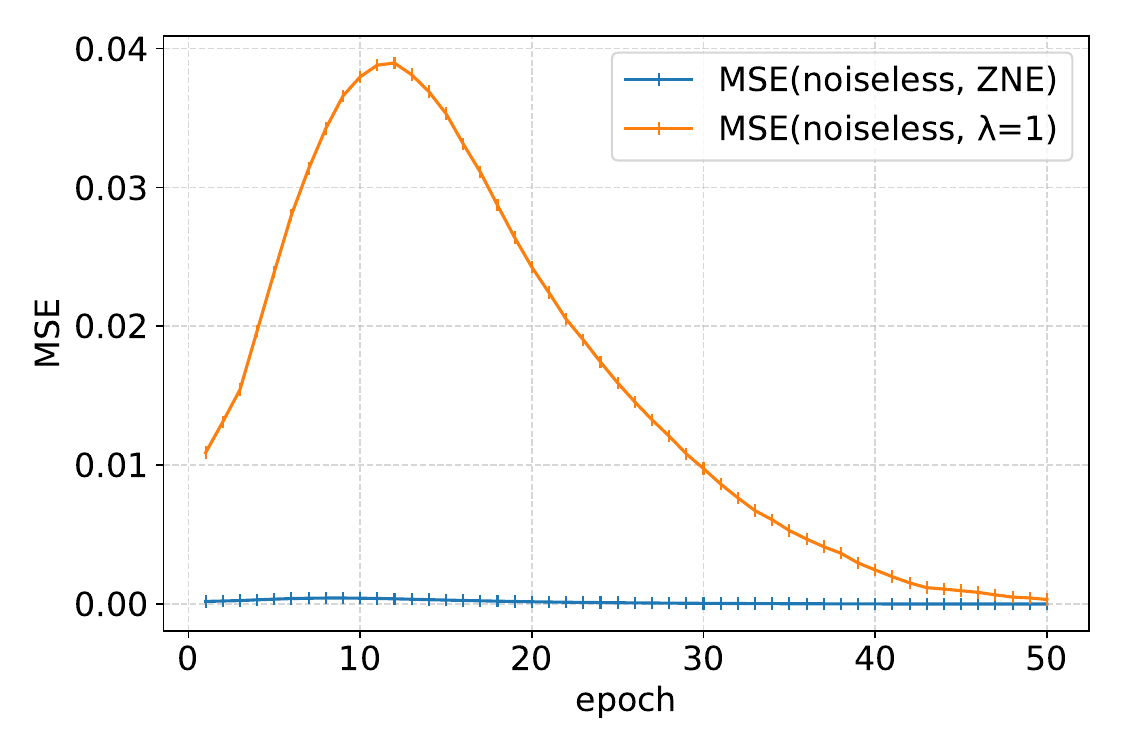}
    \subcaption{\footnotesize CIFAR-10 MSE}
  \end{subfigure}

  \vspace{0.8em}

  \begin{subfigure}[t]{0.32\textwidth}
    \centering
    \includegraphics[width=\linewidth]{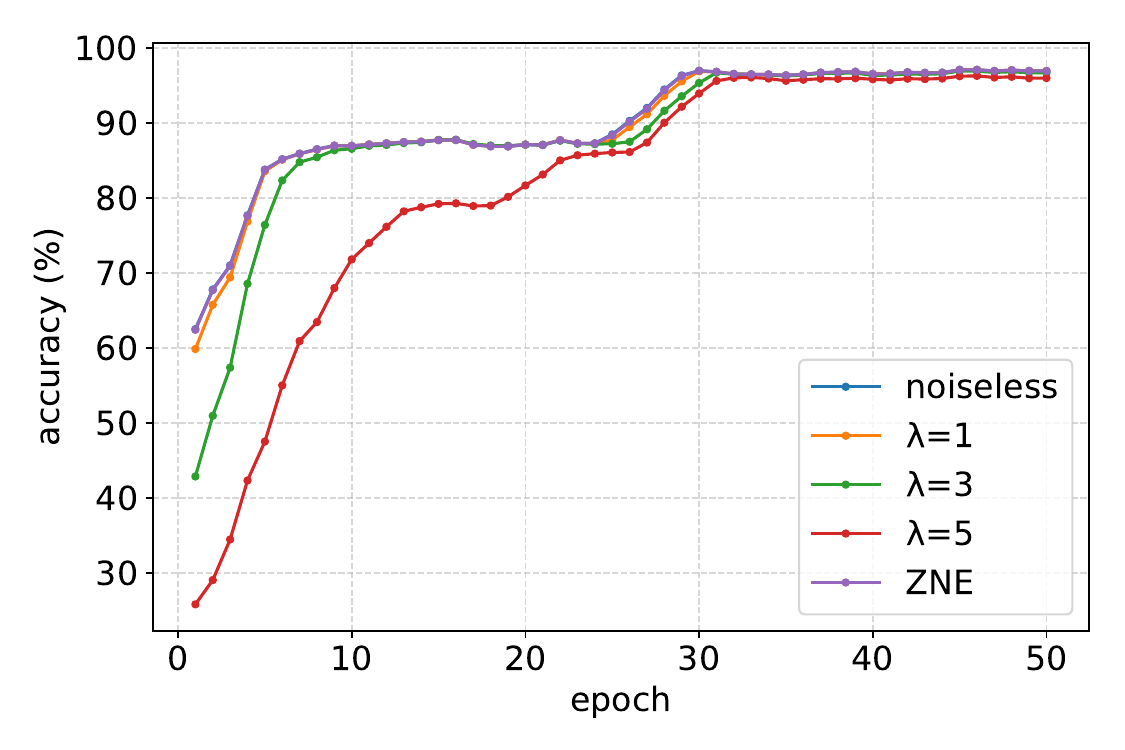}
    \subcaption{\footnotesize EuroSAT Accuracy}
  \end{subfigure}
  \hfill
  \begin{subfigure}[t]{0.32\textwidth}
    \centering
    \includegraphics[width=\linewidth]{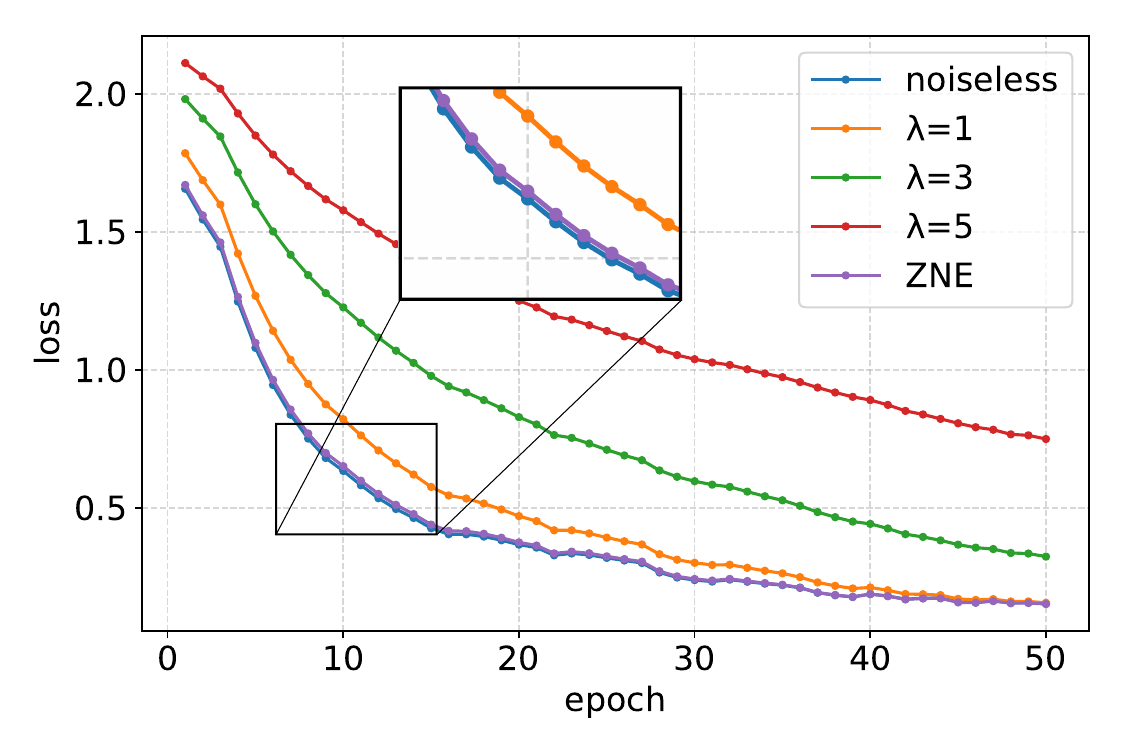}
    \subcaption{\footnotesize EuroSAT Loss}
  \end{subfigure}
  \hfill
  \begin{subfigure}[t]{0.32\textwidth}
    \centering
    \includegraphics[width=\linewidth]{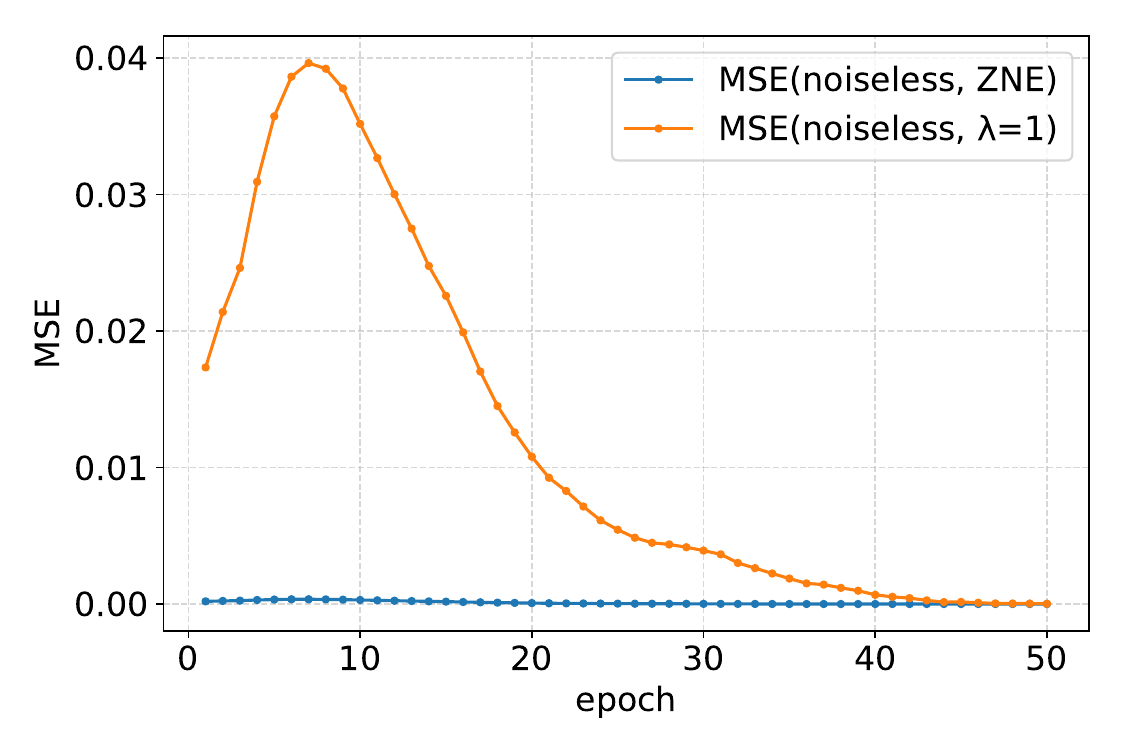}
    \subcaption{\footnotesize EuroSAT MSE}
  \end{subfigure}

  \caption{\footnotesize Comparing noiseless ($\lambda=0$), noisy ($\lambda\in\{1,3,5\}$), and zero-noise extrapolation (ZNE) used in quantum neural networks where $n_{\mathrm{wires}}{=}16$ and $n_{\mathrm{layers}}{=}5$.  ZNE continuously outperforms the noisy runs and closely follows the noiseless baseline in terms of accuracy and loss.  ZNE's recovery of noise-free behavior is illustrated by the MSE panel, which demonstrates that $\mathrm{MSE}(\text{noiseless},\text{ZNE})$ stays close to zero throughout epochs, while $\mathrm{MSE}(\text{noiseless},\lambda=1)$ is significantly higher early and decreases with training.}
  \label{fig:zne_cifar_eurosat}
\end{figure*}

\textbf{ZNE results.} Training dynamics are compared with and without zero-noise extrapolation (ZNE) in Fig.~\ref{fig:zne_cifar_eurosat}. From left to right, accuracy, loss, and the mean-squared error (MSE) for each epoch in relation to the noiseless baseline are displayed in each row (top: CIFAR-10; bottom: EuroSAT). Curves contain the ZNE estimate extrapolated from the noisy runs, the noiseless run ($\lambda{=}0$), and the noisy executions at $\lambda\in\{1,3,5\}$. As expected, greater noise scales increase loss and considerably delay the accuracy ramp-up in early epochs; by contrast, ZNE closely matches the noiseless trajectory during the entire training, producing virtually identical accuracy and very small loss variances. 

These gaps are measured by the panels on the right $\mathrm{MSE}(\text{noiseless},\lambda{=}1)$ while $\mathrm{MSE}(\text{noiseless},\text{ZNE})$ stays close to zero throughout—usually at least an order of magnitude smaller than the noisy baselines—is highest at the beginning of training and decreases as the model converges.  The modest residuals are in line with higher-order noise effects and typical training stochasticity, demonstrating that ZNE successfully eliminates the primary noise-induced bias while maintaining the optimization dynamics of the noiseless model.  There is comparable behavior on both datasets.

\textbf{ZNE Across Multiple Arms.} {\color{black}
We evaluate every candidate arm in
\[
\mathcal{A}_4=\{[1,3],[1,3,5],[1,3,5,7],[1,3,5,7,9,11,13]\}
\]
to further examine the impact of action-space expansion. Each arm represents a distinct ZNE folding-scale set. Although larger arms provide more extrapolation points, they also require deeper circuits, more folded-circuit executions, higher shot usage, and greater communication overhead. Thus, this experiment examines whether the accuracy gain from larger arms justifies the additional end-to-end cost.
}
{\color{black}
Table~\ref{tab:all-arm-comparison} shows that increasing the number of folding scales improves accuracy and loss, but the improvement diminishes as the arm size increases. For CIFAR-10, moving from \([1,3]\) to \([1,3,5,7]\) increases accuracy from \(84.96\%\) to \(85.58\%\), while the total cost increases by \(19.7\%\). Extending the arm to \([1,3,5,7,9,11,13]\) further increases accuracy only to \(85.91\%\), while the cost rises by \(38.4\%\). A similar trend appears on EuroSAT, where accuracy increases from \(96.02\%\) to \(96.18\%\) when comparing \([1,3,5,7]\) with the largest arm, but the end-to-end cost increases from \(20.9\%\) to \(41.7\%\). These results show that larger folding arms provide greater extrapolation flexibility, but their marginal accuracy gain is small relative to the additional execution and communication overhead. This validates the use of CMAB--ZNE as a cost-aware policy that determines when expensive arms are worth selecting under the current noise context.
}

\begin{table}[t]
\centering
\scriptsize
\color{black}
\caption{\footnotesize Comparison of all CMAB candidate arms under dynamic noise $(\widehat{\eta}_t)$. Accuracy and loss are averaged across epochs and reported as mean $\pm$ standard deviation over 3 seeds at circuit depth 3 and noise band $\eta=0.05$ with a 10 Mbps link budget.}
\label{tab:all-arm-comparison}
\setlength{\tabcolsep}{3pt}
\renewcommand{\arraystretch}{1.12}
\begin{tabular}{lcccccc}
\toprule
Dataset & Arm \(\Lambda_k\) & \# scales & \(S_k\) & Acc. (\%) & Loss\\
\midrule
\multirow{4}{*}{CIFAR-10}
& \([1,3]\) 
& 2 & 4 
& \(84.96 \pm 0.45\) 
& \(0.437 \pm 0.012\) 
\\

& \([1,3,5]\) 
& 3 & 9 
& \(85.34 \pm 0.43\) 
& \(0.428 \pm 0.011\) 
\\

& \([1,3,5,7]\) 
& 4 & 16 
& \(85.58 \pm 0.42\) 
& \(0.423 \pm 0.011\) 
\\

& \([1,3,5,7,9,11,13]\) 
& 7 & 49 
& \(85.91 \pm 0.39\) 
& \(0.417 \pm 0.010\) 
\\

\midrule
\multirow{4}{*}{EuroSAT}
& \([1,3]\) 
& 2 & 4 
& \(95.61 \pm 0.34\) 
& \(0.166 \pm 0.009\) 
\\
 
& \([1,3,5]\) 
& 3 & 9 
& \(95.87 \pm 0.32\) 
& \(0.161 \pm 0.008\) 
 \\

& \([1,3,5,7]\) 
& 4 & 16 
& \(96.02 \pm 0.31\) 
& \(0.157 \pm 0.008\) 
 \\
 
& \([1,3,5,7,9,11,13]\) 
& 7 & 49 
& \(96.18 \pm 0.29\) 
& \(0.154 \pm 0.007\) 
 \\
\bottomrule
\end{tabular}
\end{table}

Fig.~\ref{fig:delta-grid} reports the epoch-wise ZNE gap relative to the noiseless baseline for accuracy and loss on CIFAR-10 and EuroSAT using three arm sets: \(\{1,3\}\), \(\{1,3,5\}\), and \(\{1,3,5,7\}\). Positive \(\Delta\) indicates that ZNE is above the noiseless baseline, while negative \(\Delta\) indicates that it is below. Overall, larger arm sets reduce the ZNE bias and move the estimates closer to the noiseless trajectory, especially in later epochs. On CIFAR-10, the initial accuracy/loss gap decreases rapidly with training; on EuroSAT, \(\{1,3,5,7\}\) gives the most stable improvement.

\begin{figure}
  \centering
  \begin{subfigure}{0.24\textwidth}
    \includegraphics[width=\linewidth]{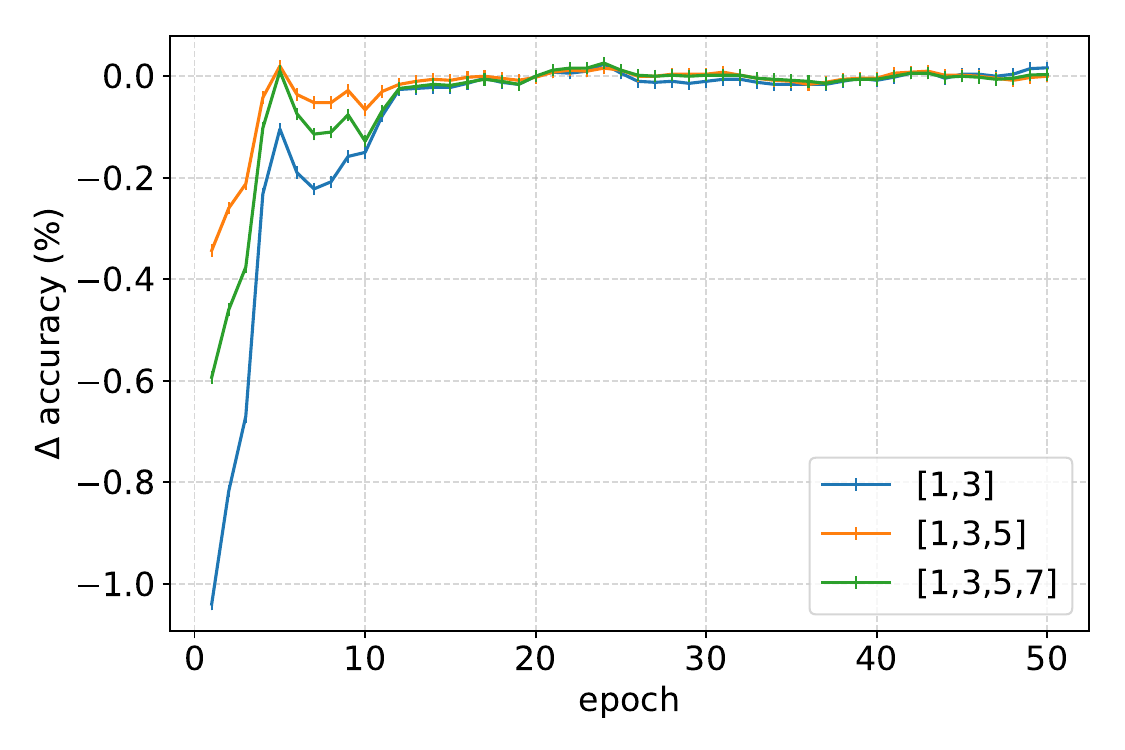}
    \caption{\footnotesize CIFAR-10: $\Delta$ accuracy (ZNE $-$ noiseless)}
    \label{fig:cifar-delta-acc}
  \end{subfigure}\hfill
  \begin{subfigure}{0.24\textwidth}
    \includegraphics[width=\linewidth]{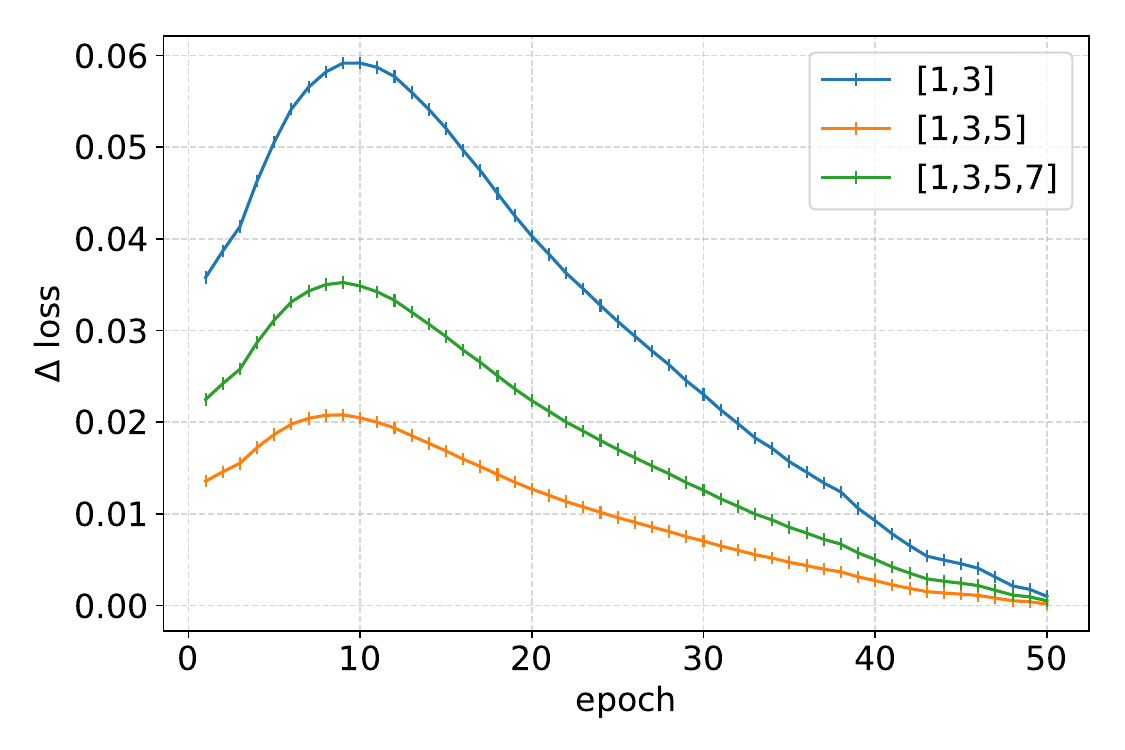}
    \caption{\footnotesize CIFAR-10: $\Delta$ loss (ZNE $-$ noiseless)}
    \label{fig:cifar-delta-loss}
  \end{subfigure}

  \vspace{0.5em}

  \begin{subfigure}{0.24\textwidth}
    \includegraphics[width=\linewidth]{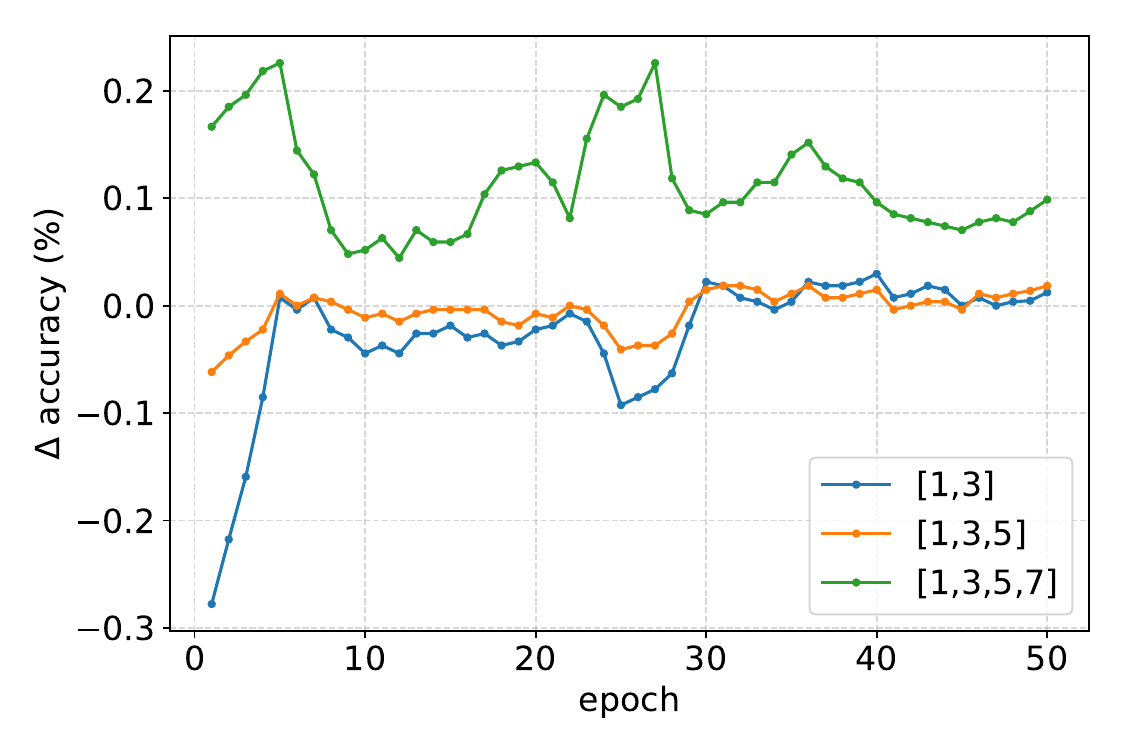}
    \caption{\footnotesize EuroSAT: $\Delta$ accuracy (ZNE $-$ noiseless)}
    \label{fig:euro-delta-acc}
  \end{subfigure}\hfill
  \begin{subfigure}{0.24\textwidth}
    \includegraphics[width=\linewidth]{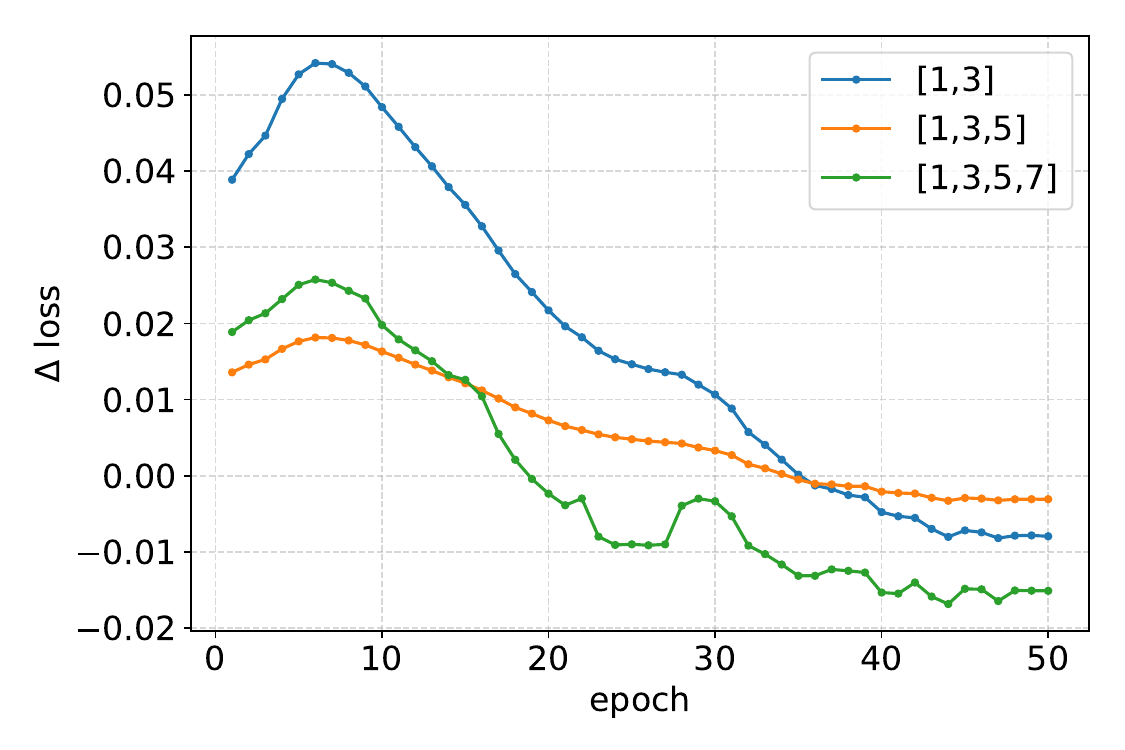}
    \caption{\footnotesize EuroSAT: $\Delta$ loss (ZNE $-$ noiseless)}
    \label{fig:euro-delta-loss}
  \end{subfigure}

  \caption{\footnotesize Zero-noise extrapolation (ZNE) deltas for different arm sets
  $\{1,3\}$, $\{1,3,5\}$, and $\{1,3,5,7\}$. Curves show the difference
  between the ZNE estimate and the noiseless baseline at each epoch for
  accuracy (left) and loss (right).}
  \label{fig:delta-grid}
\end{figure}

\textbf{Simulation on Dynamic Noise.} To show the effectiveness of the proposed CMAB-ZNE approach, we first add a dynamic noise profile instead of a static noise as illustrated in Fig.~\ref{fig:noise-proxy}. The normalized noise proxy $\hat{\eta}_t$ shows non-stationary (time-varying) hardware noise throughout training, increasing slowly from $\approx0.25$ at early epochs to $\approx1.0$ by $\sim$40 epochs before saturating with minor oscillations.  Since the effective noise level increases with training, this tendency encourages adaptive mitigation and sampling approaches (instead of fixed schedules).
\begin{figure}[t]
  \centering
  \includegraphics[width=0.69\linewidth]{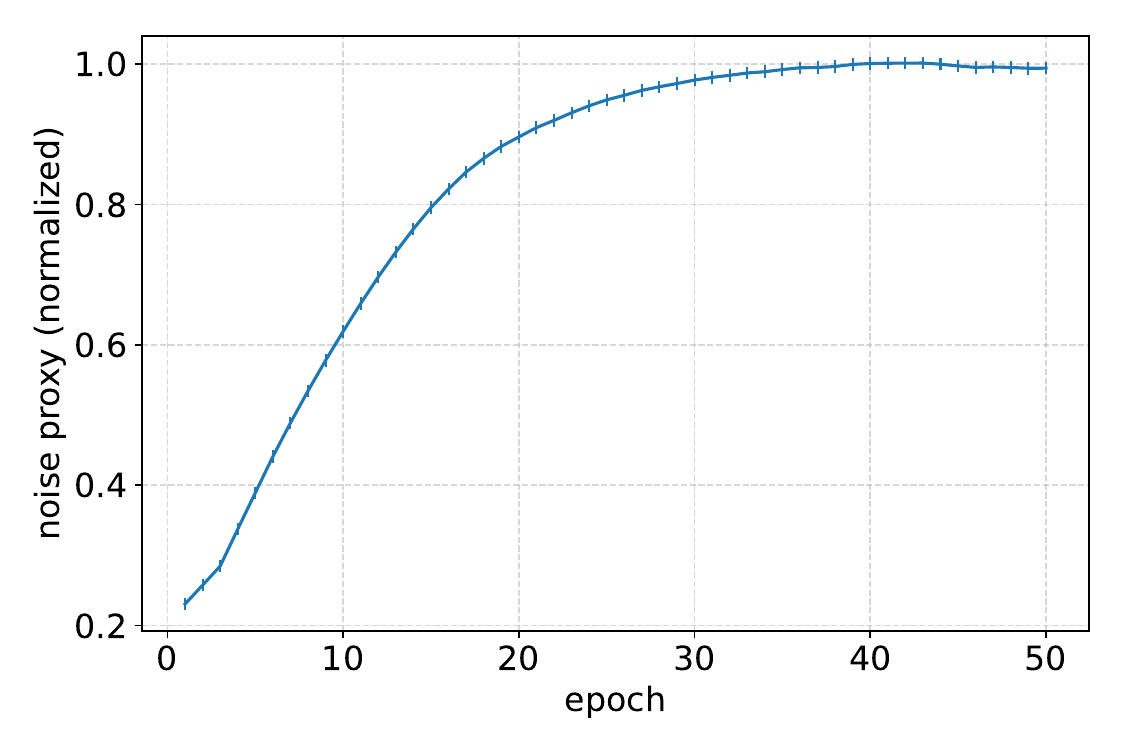}
  \caption{\footnotesize Normalized noise proxy $\hat{\eta}_t$ versus training epoch.}
  \vspace{-1em}
  \label{fig:noise-proxy}
\end{figure}

\begin{figure}
  \centering
  \begin{subfigure}[t]{0.24\textwidth}
    \centering
    \includegraphics[width=\linewidth]{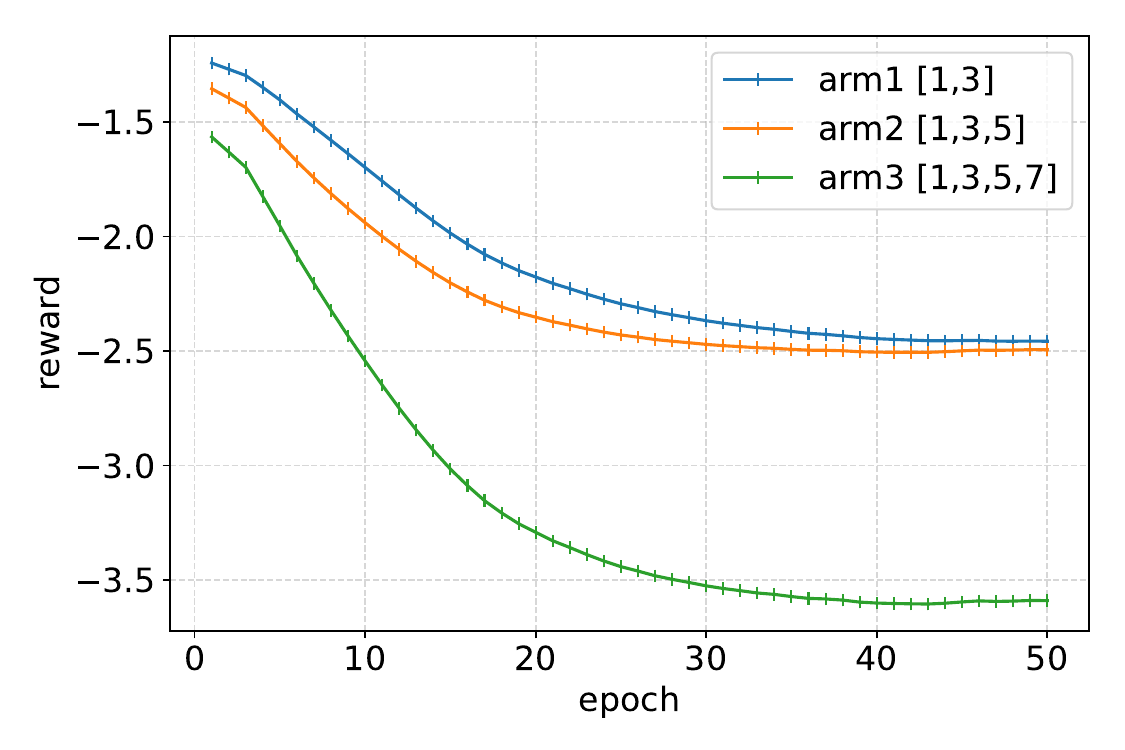}
    \caption{\footnotesize CIFAR-10: reward per arm}
    \label{fig:cifar-reward2}
  \end{subfigure}\hfill
  \begin{subfigure}[t]{0.24\textwidth}
    \centering
    \includegraphics[width=\linewidth]{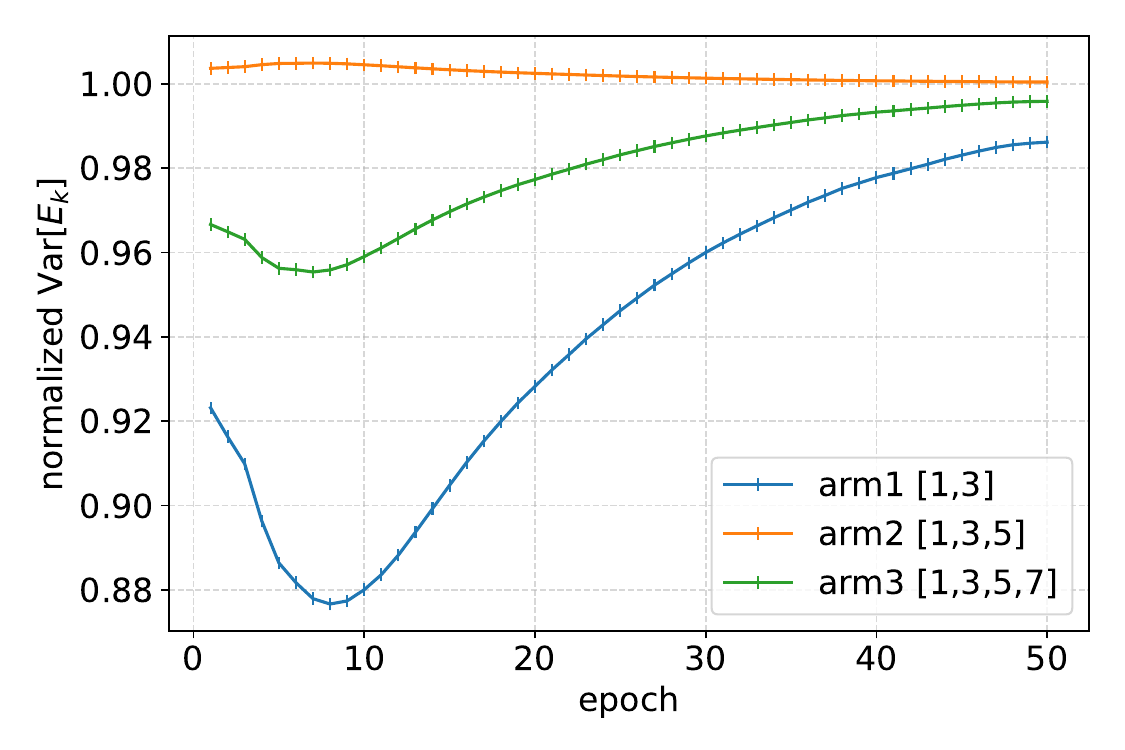}
    \caption{\footnotesize CIFAR-10: variance term $\mathrm{Var}[E_k]$ per arm}
    \label{fig:cifar-var2}
  \end{subfigure}

  \vspace{0.6em}

  \begin{subfigure}[t]{0.24\textwidth}
    \centering
    \includegraphics[width=\linewidth]{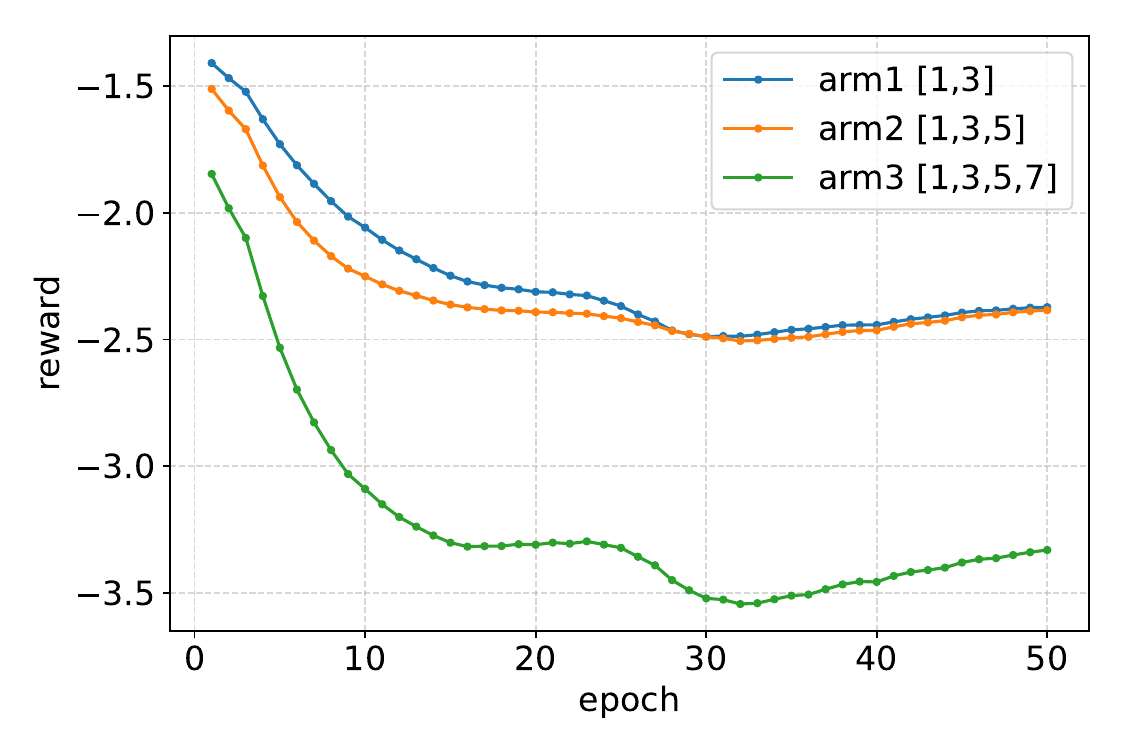}
    \caption{\footnotesize EuroSAT: reward per arm}
    \label{fig:euro-reward2}
  \end{subfigure}\hfill
  \begin{subfigure}[t]{0.24\textwidth}
    \centering
    \includegraphics[width=\linewidth]{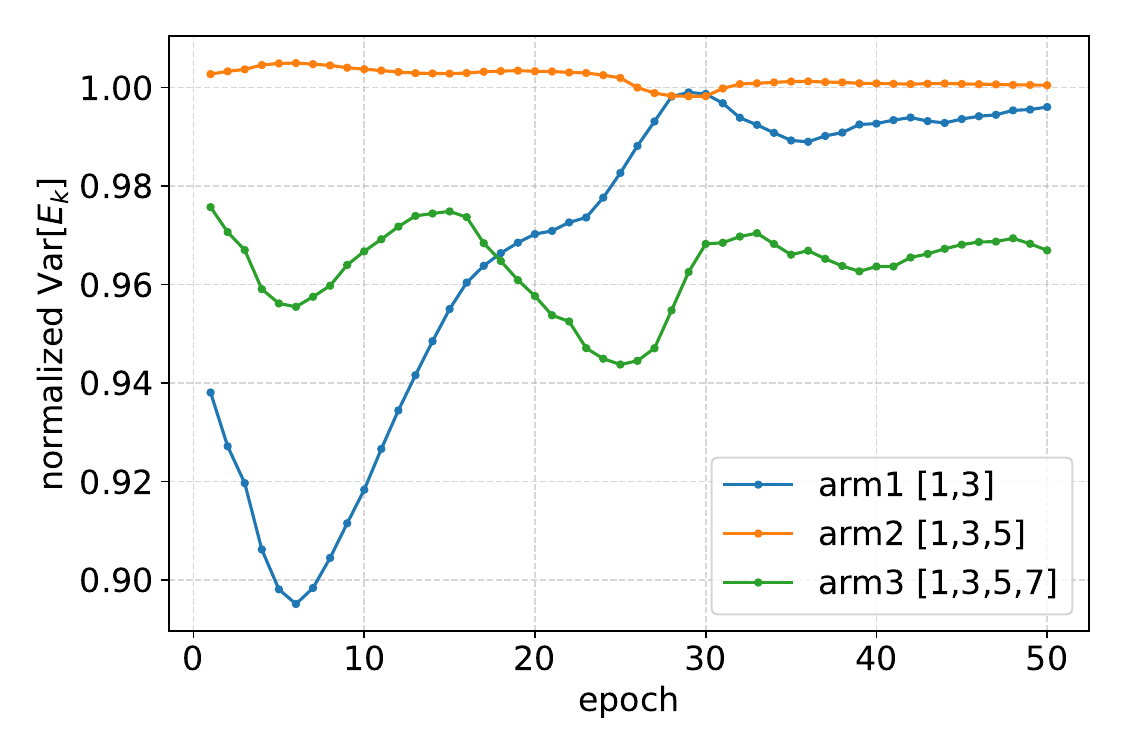}
    \caption{\footnotesize EuroSAT: variance term $\mathrm{Var}[E_k]$ per arm}
    \label{fig:euro-var2}
  \end{subfigure}

  \caption{\footnotesize Reward and variance across ZNE arm sets
  $\{1,3\}$, $\{1,3,5\}$, and $\{1,3,5,7\}$ for CIFAR-10 (top) and EuroSAT (bottom).
  Reward combines low-variance extrapolated estimates with a depth penalty, while the
  variance plots show measurement variability across epochs.}
  \label{fig:reward-variance-2x2}
\end{figure}

Fig.~\ref{fig:reward-variance-2x2} compares the safety-aware reward and normalized variance \(\mathrm{Var}[E_k]\) for ZNE arm sets \(\{1,3\}\), \(\{1,3,5\}\), and \(\{1,3,5,7\}\) on CIFAR-10 and EuroSAT. The reward balances estimator stability against the depth penalty. Across both datasets, the shallow arm \(\{1,3\}\) usually gives the highest reward because larger arms add execution overhead while providing only marginal variance reduction. This supports the CMAB design: deeper arms should be selected only when their extrapolation benefit justifies the additional cost.

\textbf{Communication \& Fidelity Summary} Table~\ref{tab:cmab-comm-fidelity} illustrates the communication and fidelity advantages achieved by CMAB-ZNE compared to the strongest fixed-ZNE baseline ([1,3,5]) under dynamic noise.  On CIFAR-10 and EuroSAT, the adaptive strategy has consistently selected parsimonious arm sets in the early and mid-training periods, decreasing average QPU-CPU round trips by around 40\% and exchanged bytes by 35\%, resulting in about 30\% lower end-to-end cost under a 10 Mbps budget.  In parallel, estimator fidelity improved by 6-7\% relative to the noiseless baseline, proving that adaptivity not only reduces communication overhead but also improves the accuracy of zero-noise extrapolation.  These results confirm that the CMAB formulation provides a better accuracy-efficiency trade-off over static ZNE approaches.

\begin{table}[t]
\centering
\scriptsize
\color{black}
\caption{\footnotesize Communication and fidelity gains of CMAB-ZNE versus the strongest fixed-ZNE baseline under dynamic noise ($\hat{\eta}_t$). Improvements are averaged across epochs and reported as mean $\pm$ standard deviation over 3 seeds at circuit depth 3 and noise band $\eta=0.05$ with a 10~Mbps link budget.}
\label{tab:cmab-comm-fidelity}
\setlength{\tabcolsep}{1.5pt}
\renewcommand{\arraystretch}{1.12}
\begin{tabular}{lccccc}
\toprule
Dataset & Baseline arm & \(\downarrow\) Round trips & \(\downarrow\) Bytes & \(\downarrow\) E2E cost & \(\uparrow\) Est. fidelity \\
 & (\([1,3,5]\)) & (\,\%) & (\,\%) & (\,\%) & (\,\%) \\
\midrule
CIFAR-10 & [1,3,5] & \(38.7 \pm 1.6\) & \(34.2 \pm 1.4\) & \(29.5 \pm 1.2\) & \(6.9 \pm 0.4\) \\
EuroSAT  & [1,3,5] & \(41.2 \pm 1.8\) & \(35.7 \pm 1.5\) & \(30.1 \pm 1.3\) & \(6.4 \pm 0.5\) \\
\bottomrule
\end{tabular}
\end{table}

{\color{black}
\textbf{Comparison with State-of-the-art Approaches.}
Table~\ref{tab:final-cmab-zne} evaluates all competing methods under the same simulation protocol to ensure a fair and repeatable comparison, instead of directly replicating values from the original publications. Specifically, all baselines use the same datasets, train/test splits, optimizer, shot budget, noise channels, classical encoder, variational quantum circuit (VQC) architecture, and dynamic noise schedule $\widehat{\eta}_t$. The compared methods include Clifford data regression (CDR)~\cite{strikis2021learning}, virtual distillation (VD)~\cite{huggins2021virtual}, probabilistic error cancellation (PEC)~\cite{cai2023quantum}, randomized compiling (RC)~\cite{jain2023improved}, reinforcement-learning-based mitigation~\cite{bordoni2024quantum}, hardware-aware optimization~\cite{niu2020hardware}, standard ZNE~\cite{temme2017error}, noise-aware folding~\cite{hour2024improving}, global-folding ZNE~\cite{sohn2025application}, and Adaptive KIK~\cite{koenig2025adaptive}. For CDR, we learn a regression map from noisy to low-noise observables using near-Clifford surrogate circuits. For VD, we use repeated noisy-state evaluations to approximate dominant-state purification. For PEC, we apply an inverse-noise correction inspired by quasi-probability using the same calibrated noise channels. For RC, we use randomized Pauli twirling and average the resulting observables. The ZNE-family baselines differ in how the folding set is selected, but they use the same folding operations and scale candidates as CMAB--ZNE. VD and PEC remain sensitive to the shot budget, backend noise model, quasi-probability normalization, and state-copy assumptions.
}

Table~\ref{tab:final-cmab-zne} shows that CMAB--ZNE achieves the best accuracy--loss trade-off on both datasets, reaching \(85.58\%\)/\(0.423\) on CIFAR-10 and \(96.02\%\)/\(0.157\) on EuroSAT. The strongest baselines are Adaptive KIK, hardware-aware optimization, and noise-aware ZNE variants, confirming that adaptive or hardware-informed mitigation is more effective than fixed extrapolation under dynamic noise. Standard ZNE, CDR, PEC, RC, and generic RL-based mitigation are less competitive in this setting, suggesting that mitigation strategies must explicitly account for time-varying noise and execution overhead. Overall, CMAB--ZNE performs best because it adapts the folding-scale set to the current noise context while penalizing unnecessary circuit-depth and communication cost.

\begin{table}
\centering
\caption{\footnotesize {\color{black}Comparison across mitigation approaches under \emph{dynamic} noise ($\hat{\eta}_t$). All competing methods were implemented by us under the same simulation environment, datasets, VQC architecture, shot budget, optimizer, and noise schedule. We report test accuracy (\%) and loss on CIFAR-10 and EuroSAT. Higher accuracy and lower loss are better.}}
\label{tab:final-cmab-zne}
\setlength{\tabcolsep}{0.5pt}
\renewcommand{\arraystretch}{1.15}
\begin{tabular}{lcccc}
\toprule
\multirow{2}{*}{Approach} & \multicolumn{2}{c}{CIFAR-10} & \multicolumn{2}{c}{EuroSAT}\\
\cmidrule(lr){2-3}\cmidrule(lr){4-5}
& Acc (\%) & Loss & Acc (\%) & Loss \\
\midrule
Clifford Data Regression (CDR)~\cite{strikis2021learning}                 & 82.93 & 0.512 & 94.86 & 0.186 \\
Virtual Distillation (VD)~\cite{huggins2021virtual}                      & 84.21 & 0.468 & 95.27 & 0.176 \\
Probabilistic Error Cancellation (PEC)~\cite{cai2023quantum}        & 83.74 & 0.489 & 95.03 & 0.181 \\
Randomized Compiling (RC)~\cite{jain2023improved}                    & 81.67 & 0.542 & 93.79 & 0.205 \\
Reinforcement learning~\cite{bordoni2024quantum}                    & 80.58 & 0.571 & 92.64 & 0.228 \\
Hardware-aware cost optimization~\cite{niu2020hardware}                  & 85.07 & 0.441 & 95.61 & 0.168 \\
\midrule
ZNE (standard extrapolation)~\cite{temme2017error}                    & 83.95 & 0.476 & 94.88 & 0.183 \\
ZNE (best practices + Pauli whirling)~\cite{majumdar2023best}  & 84.72 & 0.458 & 95.34 & 0.175 \\
ZNE (noise-aware folding)~\cite{hour2024improving}             & 85.03 & 0.449 & 95.48 & 0.171 \\
ZNE (global folding + inversion)~\cite{sohn2025application}    & 84.37 & 0.467 & 95.11 & 0.178 \\
Adaptive KIK (adaptive folding+filtering)~\cite{koenig2025adaptive} & 85.24 & 0.444 & 95.69 & 0.167 \\
\midrule
\textbf{Ours (adaptive ZNE + dynamic noise)}   & \textbf{85.58} & \textbf{0.423} & \textbf{96.02} & \textbf{0.157} \\
\bottomrule
\end{tabular}
\end{table}

{\color{black}
\subsection{Real-Hardware Validation on Rigetti Cepheus-1-108Q}
\label{sec:hardware_validation}

We conducted an additional real-hardware experiment using Amazon Braket and the Rigetti Cepheus-1-108Q superconducting quantum processor to further validate the practical applicability of the proposed \textit{CMAB--ZNE} framework beyond simulator-based noise models. Instead of performing full QPU-based training, this experiment was designed as a cost-controlled hardware inference validation. Specifically, the reduced quantum inputs, trained quantum circuit parameters, labels, and classifier weights were frozen and exported after the hybrid model was first trained in simulation. ZNE reconstruction and classification were performed as classical postprocessing, while the Rigetti QPU was used only to execute the quantum inference circuit.
The hardware validation used 30 CIFAR-10 samples, two separate hardware episodes, four ZNE scale values \(\{1,3,5,7\}\), and 500 shots per scale. Consequently, the total number of submitted QPU tasks was
\[
30 \times 2 \times 4 = 240.
\]
All 240 tasks were completed successfully, with no failed executions. To ensure reproducibility, we recorded the Amazon Braket task ARN, device ARN, ZNE scale, shot count, measured bitstring counts, Pauli-\(Z\) expectation values, runtime, estimated cost, and arm-wise ZNE reconstruction results for every hardware execution.
For each completed task, the measured bitstring counts were converted into qubit-wise Pauli-\(Z\) expectation values. For qubit \(j\) and scale \(\lambda\), the expectation value was estimated as
\[
\hat{z}_{j,\lambda}
=
\frac{N_{0,j,\lambda}-N_{1,j,\lambda}}
     {N_{0,j,\lambda}+N_{1,j,\lambda}},
\]
where \(N_{0,j,\lambda}\) and \(N_{1,j,\lambda}\) denote the number of shots in which qubit \(j\) was measured as 0 and 1, respectively. Three ZNE arms were then reconstructed using the measured expectation vectors from the shared scale set \(\{1,3,5,7\}\):
\[
\Lambda_1=\{1,3\}, \qquad
\Lambda_2=\{1,3,5\}, \qquad
\Lambda_3=\{1,3,5,7\}.
\]
This shared-scale protocol avoids running each arm separately. Without scale reuse, evaluating all three arms would require \(30 \times 2 \times (2+3+4)=540\) QPU tasks. In contrast, the shared-scale protocol required only 240 tasks, reducing the number of hardware tasks by approximately \(55.56\%\).
The real-hardware results are reported in Table~\ref{tab:rigetti_hardware_results}. All three ZNE arms achieved the same mean accuracy of \(93.33\%\). The shallow arm \(\{1,3\}\) achieved the best cost-aware reward, while the intermediate arm \(\{1,3,5\}\) achieved the lowest loss. This result shows that using more ZNE scale points does not necessarily improve real-hardware accuracy. Instead, the shallow arm provided the best performance--cost trade-off while preserving the same accuracy. This reinforces the main motivation behind \textit{CMAB--ZNE}: adaptive ZNE should select arms based on both mitigation quality and execution overhead, rather than always choosing the deepest or most expensive folding configuration.
\begin{table}[t]
\centering
\caption{\textcolor{black}{Real-hardware validation results on Rigetti Cepheus-1-108Q using CIFAR-10 samples. The experiment used 30 samples, two hardware episodes, 500 shots per scale, and the shared scale set \(\{1,3,5,7\}\).}}
\label{tab:rigetti_hardware_results}
\setlength{\tabcolsep}{3pt}
\renewcommand{\arraystretch}{1.05}
\begin{tabular}{lccccc}
\toprule
\textbf{ZNE Arm} & \textbf{Acc.} & \textbf{Loss} & \textbf{Reward} & \textbf{Entropy} & \textbf{Cost} \\
\midrule
\(\{1,3\}\)       & 93.33\% & 0.3304 & -0.3660 & 0.1133 & \$61.50 \\
\(\{1,3,5\}\)     & 93.33\% & 0.3295 & -0.3851 & 0.1137 & \$92.25 \\
\(\{1,3,5,7\}\)   & 93.33\% & 0.3300 & -0.4057 & 0.1138 & \$123.00 \\
\bottomrule
\end{tabular}
\vspace{-0.6em}
\end{table}
The hardware run took approximately 1210.01 seconds and incurred an estimated cost of \$369.00. The table also shows that deeper ZNE arms have higher estimated execution costs because they require more scale evaluations. Nevertheless, the shallow arm \(\{1,3\}\) matched the accuracy of the largest arm \(\{1,3,5,7\}\) while achieving the best cost-aware reward. This result demonstrates the need for cost-aware arm selection under realistic QPU cost constraints and validates the successful execution of the trained quantum inference circuit, ZNE scale evaluation, arm reconstruction, expectation-value estimation, and reward computation on a real superconducting QPU.
}

\subsection{Limitations}
{\color{black}
Although we added real-hardware validation on Rigetti Cepheus-1-108Q, the experiment remains cost-controlled and is limited to frozen quantum inference and ZNE arm reconstruction, rather than full QPU-based training. Broader validation is still needed across additional samples, calibration windows, shot budgets, QPU architectures, and hardware episodes. The simulated Lindblad-style model only partially captures the drift, non-Markovian effects, crosstalk, readout errors, and calibration artifacts that real NISQ devices may exhibit. Future work should incorporate on-device calibration, readout mitigation, backend-specific \(\lambda\)-scaling, telemetry signals such as \(T_1/T_2\), gate errors, and readout error rates. Finally, direct integration of hardware telemetry and calibration metadata may provide stronger CMAB context signals, while linear or quadratic Richardson fits may underfit highly nonlinear noise responses.
}

\section{Conclusions}
\label{sec:conclusion}

In this paper, we presented CMAB--ZNE, an adaptive noise-mitigation framework that improves the robustness of variational quantum circuits on NISQ hardware. Our method formulates ZNE folding-depth selection as a contextual bandit problem, where a per-epoch reward balances estimator variance and circuit-depth overhead using a dynamic noise proxy. This reformulates ZNE from a static post-processing step into an online policy that adapts to non-stationary noise.
Simulations and experiments on \textsc{EuroSAT} and \textsc{CIFAR-10} demonstrate that CMAB--ZNE consistently narrows the gap to the noiseless target more efficiently than fixed-$\lambda$ baselines under mixed-state simulations with Lindblad-modeled amplitude damping and dephasing. The adaptive policy automatically switches between shallow and deep arms according to the noise condition, resulting in a better accuracy--efficiency trade-off. CMAB--ZNE reduces quantum circuit execution round trips by up to 40\%, bytes transferred by 35\%, and end-to-end cost by 30\% over a 10 Mbps link. It also improves estimator fidelity by approximately 7\% compared with fixed ZNE. 


\ifCLASSOPTIONcaptionsoff
  \newpage
\fi

\bibliography{main}
\bibliographystyle{IEEEtran}
\begin{IEEEbiographynophoto}{Ratun Rahman} 
is a Ph.D. candidate in the Department of Electrical and Computer Engineering (ECE) at The University of Alabama in Huntsville, USA. His work focuses on machine learning, federated learning, and quantum machine learning. He has published papers in several IEEE journals on TNNLS, TC, TVT, IoTJ, TNSE, TETCI, Network, CL, and GRSL, and conferences, including NeurIPS and CVPR workshops, IEEE QCE, and IEEE CCNC. 
\end{IEEEbiographynophoto}

\vspace{-2em}

\begin{IEEEbiographynophoto}{Dinh C. Nguyen}
is an assistant professor at the Department of Electrical and Computer Engineering, The University of Alabama in Huntsville, USA. He obtained the Ph.D. degree in computer science from Deakin University, Australia in 2021. His current research interests include quantum machine learning, Internet of Things, wireless networking with over 70 publications.
\end{IEEEbiographynophoto}
\end{document}